  \newcommand*{\CTIV}{Vovk:arXiv0904}
  \newcommand*{\CTXIII}{GTP45}
  \newcommand*{\GTPXXVII}{Vovk:arXiv0905}
\newcommand{\zzrelax}[1]{\relax}
  \newcommand{\Extra}[1]{}
\theoremstyle{plain}
\newtheorem{theorem}{Theorem}
\newtheorem{lemma}[theorem]{Lemma}
\newtheorem*{KMTtheorem}{KMT theorem}
\theoremstyle{definition}
\newtheorem{remark}[theorem]{Remark}
\DeclareMathOperator{\Expect}{\mathbb{E}}
\DeclareMathOperator{\Prob}{\mathbb{P}}
\newcommand{\pretend}[2]{\smash{\mathrlap{#1}}\phantom{#2}}
\DeclareMathOperator{\UEG}{\pretend{\overline{\mathbb{E}}}{\mathbb{E}}^{\rm g}}
\DeclareMathOperator{\UPG}{\pretend{\overline{\mathbb{P}}}{\mathbb{P}}^{\rm g}}
\DeclareMathOperator{\UEM}{\pretend{\overline{\mathbb{E}}}{\mathbb{E}}^{\rm m}}
\DeclareMathOperator{\UPM}{\pretend{\overline{\mathbb{P}}}{\mathbb{P}}^{\rm m}}
\newcommand{\FFF}{\mathcal{F}}
\newcommand{\K}{\EuScript{K}}
\DeclareMathOperator{\ntt}{ntt}
\newcommand{\III}{\boldsymbol{1}}
\newcommand{\given}{\mathbin{|}}
\newcommand{\st}{\mid}
\newcommand{\graph}{\mathrm{graph}}
\def\clap#1{\hbox to 0pt{\hss#1\hss}}
\def\mathrlap{\mathpalette\mathrlapinternal}
\def\mathrlapinternal#1#2{\rlap{$\mathsurround=0pt#1{#2}$}}
  \title{Another example of duality
    between game-theoretic and measure-theoretic probability%
    \thanks{The version of this paper
      at \protect\url{http://probabilityandfinance.com} (Working Paper 46)
      is updated more often.}}
  \author{Vladimir Vovk}
\begin{document}
\maketitle
\begin{abstract}
  This paper makes a small step towards a non-stochastic version
  of superhedging duality relations
  in the case of one traded security with a continuous price path.
  Namely, we prove the coincidence of game-theoretic and measure-theoretic expectation
  for lower semicontinuous positive functionals.
  We consider a new broad definition of game-theoretic probability,
  leaving the older narrower definitions for future work.
\end{abstract}

\section{Introduction}

The words like ``positive'' and ``increasing'' will be understood in the wide sense
(e.g., $a$ is positive if $a\ge0$),
and the qualifier ``strictly'' will indicate the narrow sense
(e.g., $a$ is strictly positive if $a>0$).
The set of all continuous real-valued functions on a topological space $X$ is denoted, as usual, $C(X)$,
and its subset consisting of positive functions is denoted $C^+(X)$.
We abbreviate expressions such as $C([0,T])$ and $C^+([0,T])$, where $T>0$,
$C([0,\infty))$, and $C^+([0,\infty))$
to $C[0,T]$, $C^+[0,T]$, $C[0,\infty)$, and $C^+[0,\infty)$,
respectively,
and let $C_a[0,T]$, $C^+_a[0,T]$, $C_a[0,\infty)$, and $C^+_a[0,\infty)$
stand for the subsets of these sets consisting of the functions $f$ satisfying $f(0)=a$,
for a given constant $a$.

Let $\mathbb{N}:=\{1,2,\ldots\}$ be the set of all strictly positive integers,
and $\mathbb{N}_0:=\{0,1,2,\ldots\}$ be the set of all positive integers.

As usual $a\wedge b$ stands for minimum of $a$ and $b$ and $a\vee b$ for their maximum.
In this paper, the operators $\wedge$ and $\vee$ have higher precedence than the arithmetic operators:
e.g., $a+b\wedge c$ means $a+(b\wedge c)$.
Other conventions of this kind are that:
\begin{itemize}
\item
  Cartesian product $\times$ has higher precedence than union $\cup$;
  so that, e.g., $A\cup\{1\}\times[0,\infty)$
  means $A\cup(\{1\}\times[0,\infty))$;
\item
  implicit multiplication (not using a multiplication sign such as $\times$ or $\cdot$)
  has higher precedence than division;
  so that, e.g., $S/NL$ means $S/(NL)$.
\end{itemize}

In our informal discussions we will use symbols $\approx$ for approximate equality
and $\lesssim$ and $\gtrsim$ for approximate inequalities.

In this paper we consider a finite time interval $[0,T]$ where $T\in(0,\infty)$;
without loss of generality we set $T:=1$.

\section{The main result}

The \emph{sample space} used in this paper, $\Omega:=C^+_1[0,1]$,
is the set of all positive continuous functions $\omega:[0,1]\to[0,\infty)$
such that $\omega(0)=1$.
Intuitively, the functions in $\Omega$ are price paths of a financial security
whose initial price serves as the unit for measuring its later prices.

We equip $\Omega$ with the usual $\sigma$-algebra $\FFF$,
i.e., the smallest $\sigma$-algebra making all functions $\omega\in\Omega\mapsto\omega(t)$,
$t\in[0,1]$, measurable.
A \emph{process} (more fully, an \emph{adapted process}) $\mathfrak{S}$
is a family of extended random variables
$\mathfrak{S}_t:\Omega\to[-\infty,\infty]$, $t\in[0,\infty)$,
such that, for all $\omega,\omega'\in\Omega$ and all $t\in[0,\infty)$,
\[
  \omega|_{[0,t]}
  =
  \omega'|_{[0,t]}
  \Longrightarrow
  \mathfrak{S}_t(\omega)=\mathfrak{S}_t(\omega');
\]
its \emph{sample paths} are the functions $t\in[0,1]\mapsto\mathfrak{S}_t(\omega)$.
A \emph{stopping time} is an extended random variable $\tau:\Omega\to[0,\infty]$ such that,
for all $\omega,\omega'\in\Omega$,
\[
  \omega|_{[0,\tau(\omega)\wedge1]}
  =
  \omega'|_{[0,\tau(\omega)\wedge1]}
  \Longrightarrow
  \tau(\omega)=\tau(\omega'),
\]
where $\omega|_A$ stands for the restriction of $\omega$ to $A\subseteq[0,1]$.
For any stopping time $\tau$,
the $\sigma$-algebra $\FFF_{\tau}$ is defined as the family of all events $E\in\FFF$ such that,
for all $\omega,\omega'\in\Omega$,
\begin{equation}\label{eq:tau}
  \left(
    \omega|_{[0,\tau(\omega)\wedge1]}
    =
    \omega'|_{[0,\tau(\omega)\wedge1]},
    \omega\in E
  \right)
  \Longrightarrow
  \omega'\in E.
\end{equation}
Therefore, a random variable $X$ is $\FFF_{\tau}$-measurable if and only if,
for all $\omega,\omega'\in\Omega$,
\[
  \omega|_{[0,\tau(\omega)\wedge1]}
  =
  \omega'|_{[0,\tau(\omega)\wedge1]}
  \Longrightarrow
  X(\omega)=X(\omega').
\]

\begin{remark}
  Our definitions (convenient for the purposes of this paper)
  are equivalent to the standard ones by Galmarino's test
  (\cite{Dellacherie/Meyer:1978}, IV.100).
\end{remark}

First we define game-theoretic probability and expectation,
partly following Perkowksi and Pr\"omel
\cite{Perkowski/Promel:2013,Perkowski/Promel:2014,Beiglbock/etal:2015}
(this is a ``broad'' definition making our task easier;
the older ``narrow'' definition of \cite{\CTIV}
is much more conservative and might require stronger assumptions
for our main result to hold true;
another broad definition was given in \cite{\CTXIII}).
A \emph{simple trading strategy} $G$
consists of an increasing sequence of stopping times
$\tau_1\le\tau_2\le\cdots$
(we may assume, without loss of generality, $\tau_n\in[0,1]\cup\{\infty\}$
and $\tau_n<\tau_{n+1}$ unless $\tau_n=\infty$)
and, for each $n=1,2,\ldots$, a bounded $\FFF_{\tau_{n}}$-measurable function $h_n$.
It is required that, for each $\omega\in\Omega$,
$\lim_{n\to\infty}\tau_n(\omega)=\infty$.
To such $G$ and an \emph{initial capital} $c\in\mathbb{R}$
corresponds the \emph{simple capital process}
\begin{equation}\label{eq:simple-capital}
  \K^{G,c}_t(\omega)
  :=
  c
  +
  \sum_{n=1}^{\infty}
  h_n(\omega)
  \bigl(
    \omega(\tau_{n+1}(\omega)\wedge t)-\omega(\tau_n(\omega)\wedge t)
  \bigr),
  \quad
  t\in[0,\infty);
\end{equation}
the value $h_n(\omega)$ will be called the \emph{bet}
(or \emph{bet on $\omega$}, or \emph{stake}) at time $\tau_n$,
and $\K^{G,c}_t(\omega)$ will be called the \emph{capital} at time $t$.
For $c\ge0$, let $\mathcal{C}_c$ be the class of positive functionals of the form $\K^{G,c}_1$,
$G$ ranging over simple trading strategies;
intuitively, these are the functionals that can be hedged with initial capital $c$
by a simple strategy that does not risk bankruptcy
(notice that $\forall\omega:\K^{G,c}_1(\omega)\ge0$
implies $\forall\omega\:\forall t:\K^{G,c}_t(\omega)\ge0$).

A class $\mathcal{C}$ of functionals $F:\Omega\to[0,\infty]$ is \emph{$\liminf$-closed}
if $F\in\mathcal{C}$ whenever there is a sequence $F_1,F_2,\ldots$ of functionals in $\mathcal{C}$
such that
\begin{equation}\label{eq:PP}
  \forall\omega\in\Omega:
  F(\omega)
  \le
  \liminf_{n\to\infty}
  F_n(\omega).
\end{equation}
The intuition is that if $F_1,F_2,\ldots$ can be superhedged, so can $F$ in the limit.
It is clear that for each class $\mathcal{C}$ of functionals
there is a smallest $\liminf$-closed class, denoted $\overline{\mathcal{C}}$, containing $\mathcal{C}$.

The \emph{upper game-theoretic expectation} of a functional $F:\Omega\to[0,\infty]$
is defined to be
\begin{equation}\label{eq:UEG}
  \UEG(F)
  :=
  \inf
  \left\{
    c
    \st
    F \in \overline{\mathcal{C}_c}
  \right\}.
\end{equation}
where $\mathcal{C}_c$ is as defined above.
The \emph{upper game-theoretic probability} of $E\subseteq\Omega$
is $\UPG(E):=\UEG(\III_E)$, $\III_E$ being the indicator function of $E$.

The \emph{upper measure-theoretic expectation} of $F$ is defined to be
\begin{equation*}
  \UEM(F)
  :=
  \sup_P
  \int F dP,
\end{equation*}
where $P$ ranges over all \emph{martingale measures},
i.e., probability measures on $\Omega$
under which the process $X_t(\omega):=\omega(t)$ is a martingale\Extra{\ (or local martingale, or supermartingale)},
and $\int$ stands for upper integral.
The \emph{upper measure-theoretic probability} of $E\subseteq\Omega$ is $\UPM(E):=\UEM(\III_E)$.

Now we can state our main result, Theorem~\ref{thm:main},
in which ``lower semicontinuous'' refers to the standard topology on $\Omega$
generated by the usual uniform metric
\begin{equation}\label{eq:metric-Omega}
  \rho_U(\omega,\omega')
  :=
  \sup_{t\in[0,1]}
  \lvert \omega(t)-\omega'(t) \rvert.
\end{equation}

\begin{theorem}\label{thm:main}
  For any lower semicontinuous functional $F:\Omega\to[0,\infty]$,
  \begin{equation*}
    \UEG(F)
    =
    \UEM(F)
  \end{equation*}
  (the inequality $\ge$ holding for all $F:\Omega\to[0,\infty]$).
\end{theorem}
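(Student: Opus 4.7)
The inequality $\UEG(F)\geq\UEM(F)$ is straightforward and holds for every positive $F$. Fix a simple trading strategy $G$ with bounded bets and initial capital $c$, and any martingale measure $P$. The simple capital process $t\mapsto\K^{G,c}_t$ is a nonnegative local $P$-martingale (a martingale transform of the $P$-martingale $\omega\mapsto\omega(t)$ by a bounded predictable integrand piecewise constant on the stochastic intervals $[\tau_n,\tau_{n+1})$), hence a $P$-supermartingale, so $\int\K^{G,c}_1\,dP\leq c$. The class of positive functionals with $P$-integral at most $c$ is itself $\liminf$-closed by Fatou's lemma and therefore contains $\overline{\mathcal{C}_c}$; taking infima over $c$ and suprema over $P$ yields $\UEM(F)\leq\UEG(F)$.

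For the reverse inequality when $F$ is lower semicontinuous, fix $c>\UEM(F)$ and aim at $F\in\overline{\mathcal{C}_c}$. Since $F$ is LSC and positive on the metric space $(\Omega,\rho_U)$, write $F=\sup_n F_n$ with $(F_n)$ an increasing sequence of bounded continuous positive functionals; then $F=\liminf_n F_n$ and $\UEM(F_n)\leq\UEM(F)<c$, so by $\liminf$-closure it suffices to treat bounded continuous $F$. For such $F$, introduce the piecewise linear interpolation map $\pi_N\colon\Omega\to\Omega$ sending $\omega$ to its interpolation through $(k/2^N,\omega(k/2^N))_{k=0}^{2^N}$ (positivity at dyadic points keeps $\pi_N(\omega)\in\Omega$). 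Each $\omega$ is uniformly continuous on $[0,1]$, so $\pi_N(\omega)\to\omega$ in $\rho_U$ and $F\circ\pi_N\to F$ pointwise, giving a further $\liminf$-reduction to the cylinder functionals $G_N:=F\circ\pi_N$, which depend on $\omega$ only through its $2^N+1$ dyadic-time values.

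For each cylinder $G_N$ the problem is finite-dimensional. Classical discrete-time superhedging duality on $(0,\infty)^{2^N+1}$ (a Hahn--Banach separation argument in the spirit of Dalang--Morton--Willinger or F\"ollmer--Schied) produces, for any $c'$ strictly exceeding $\sup_\mu\int G_N\,d\mu$---the supremum running over discrete positive martingale laws $\mu$ starting at $1$---a bounded predictable discrete-time strategy superhedging $G_N$ with initial wealth $c'$. This lifts to a simple trading strategy with deterministic stopping times $\tau_k\equiv k/2^N$, placing $G_N\in\mathcal{C}_{c'}$.

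The main obstacle is to bound $\sup_\mu\int G_N\,d\mu$ by a quantity tending to $\UEM(F)$, since direct monotonicity fails ($G_N\leq F$ need not hold). The remedy is a positive martingale bridge construction: every discrete positive martingale law $\mu$ on $(0,\infty)^{2^N+1}$ with $X_0=1$ should extend to a continuous positive martingale measure $P_\mu$ on $\Omega$ whose paths lie uniformly close on each dyadic sub-interval to the linear interpolation of the sampled values---for example by gluing time-changed exponential Brownian bridges of small intrinsic variance. A tightness/localization argument (valid because $c>\UEM(F)$ leaves a strict gap) then confines the relevant $P_\mu$ to a compact subset of $\Omega$ on which the bounded continuous $F$ is uniformly continuous, yielding $\sup_\mu\int G_N\,d\mu\to\UEM(F)$. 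Hence $\UEM(G_N)<c$ for all large $N$, and chaining the three $\liminf$-closure steps places $F$ in $\overline{\mathcal{C}_c}$. The hardest technical pieces are (i) the positivity-preserving martingale bridges with small sup-norm deviation from linear interpolation, and (ii) the localization argument compensating for the non-compactness of $\Omega$.
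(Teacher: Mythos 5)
Your argument for the inequality $\UEG\ge\UEM$ is correct and in fact cleaner than the paper's: observing that for a fixed martingale measure $P$ the class $\{F\ge0:\int F\,dP\le c\}$ is $\liminf$-closed (by Fatou) and contains $\mathcal{C}_c$ (by the supermartingale property of nonnegative simple capital processes), so that it contains $\overline{\mathcal{C}_c}$, avoids the transfinite induction the paper runs through.

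The proposal for $\UEG\le\UEM$ has a genuine gap, and it is located exactly where you flag the ``hardest technical piece.'' The continuous positive martingale bridge you need cannot exist in the form you describe. Given a discrete martingale law $\mu$, any continuous martingale measure $P_\mu$ whose dyadic marginals reproduce $\mu$ must, on the interval $[k/2^N,(k+1)/2^N]$, realize the conditional law $\nu_k$ of $X_{k+1}$ given the past, and $\nu_k$ has mean $X_k$: the endpoint is \emph{not} $\FFF_{k/2^N}$-measurable. If $\nu_k$ has appreciable spread, then the martingale must fluctuate by roughly that spread before ``deciding'' its terminal value, so it cannot stay uniformly close to the linear interpolant between $X_k$ and the eventual $X_{k+1}$. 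In particular your claim that every discrete positive martingale law extends to a $P_\mu$ with $\rho_U(\omega,\pi_N(\omega))$ small with high probability is false, and the tightness/localization sketch does not repair it: the near-maximizing discrete laws for $\sup_\mu\int G_N\,d\mu$ can have arbitrarily large one-step conditional variances, and nothing in the setup restricts them.

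This points to a more structural divergence from the paper. You discretize along \emph{deterministic} dyadic times, which gives $F\le\liminf_N F\circ\pi_N$ (good for the game-theoretic $\liminf$-closure) but gives no pointwise inequality the other way, forcing you to prove $\limsup_N\UEM(F\circ\pi_N)\le\UEM(F)$---exactly the unproved step. The paper instead discretizes along the \emph{intrinsic} (quadratic-variation) time change $\phi_s$ and replaces the piecewise-linear interpolant by a ``slab'' set $A^{S,f}_N\supseteq\bar\omega$; since its extended functional $F'$ is monotone decreasing in the set, this gives the pointwise bound $F^{S,f}_N\le F$ and hence the trivial estimate $\UEM(F^{S,f}_N)\le\UEM(F)$, eliminating the need for any martingale bridge. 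The switch to the Hausdorff metric, which looks like an oddity, is what makes this slab approximation and the resulting monotonicity available. Without some substitute for that monotonicity---or a genuinely different argument that $\UEM(F\circ\pi_N)$ does not exceed $\UEM(F)$ in the limit---the proposal does not close.
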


An earlier result of the same kind
is the discrete-time Theorem~1 of \cite{\GTPXXVII}.

\section{Proof of Theorem~2}

In this section we prove the coincidence of $\UEG$ and $\UEM$
on ``simple'' (lower semicontinuous in this version of the paper) positive functionals.
We prove the inequality $\ge$ in Subsection~\ref{subsec:ge}
and the inequality $\le$ in Subsection~\ref{subsec:le}.
Notice that we can ignore $\omega\in\Omega$ such that $0=\omega(t)<\omega(s)$
for some $0\le t<s$.

On a few occasions we will use the following simple lemma.
\begin{lemma}\label{lem:outer}
  The functions $\UEG$ and $\UEM$ are $\sigma$-subadditive:
  for any sequence of positive functionals $F_1,F_2,\ldots$ (taking values in $[0,\infty]$),
  \begin{align}
    \UEG
    \left(
      \sum_{n=1}^{\infty} F_n
    \right)
    &\le
    \sum_{n=1}^{\infty}
    \UEG(F_n),\label{eq:subadditivity-1}\\
    \UEM
    \left(
      \sum_{n=1}^{\infty} F_n
    \right)
    &\le
    \sum_{n=1}^{\infty}
    \UEM(F_n).\label{eq:subadditivity-2}
  \end{align}
  (And therefore, the set functions $\UPG$ and $\UPM$ are outer measures.)
\end{lemma}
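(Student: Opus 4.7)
The plan is to handle the two inequalities separately: (\ref{eq:subadditivity-2}) reduces to standard measure theory, whereas (\ref{eq:subadditivity-1}) requires some work with the abstract $\liminf$-closure.

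For (\ref{eq:subadditivity-2}), I would note that for each fixed martingale measure $P$ the upper integral $\int\cdot\,dP$ is itself $\sigma$-subadditive on positive functionals: given $\varepsilon>0$, pick measurable majorants $G_n\ge F_n$ with $\Expect_P G_n \le \int F_n\,dP + \varepsilon 2^{-n}$; then $\sum_n G_n$ is a measurable majorant of $\sum_n F_n$ whose $\Expect_P$-integral is within $\varepsilon$ of $\sum_n \int F_n\,dP$. Combined with the elementary inequality $\sup_P \sum_n a_n(P) \le \sum_n \sup_P a_n(P)$ for positive $a_n$, this gives (\ref{eq:subadditivity-2}).

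For (\ref{eq:subadditivity-1}) the idea is to run countably many hedging strategies in parallel, allocating the $n$th one capital slightly above $\UEG(F_n)$. The key sub-lemma is
\begin{equation*}
  F\in\overline{\mathcal{C}_c},\ G\in\overline{\mathcal{C}_d}
  \ \Longrightarrow\
  F+G\in\overline{\mathcal{C}_{c+d}}.
\end{equation*}
At the base level this asserts $\mathcal{C}_c+\mathcal{C}_d\subseteq\mathcal{C}_{c+d}$, which I would obtain by merging the two sequences of stopping times underlying simple strategies (enumerating their union in increasing order) and adding the corresponding bets on each resulting subinterval; the sum capital process has initial capital $c+d$ and coincides pointwise with $\K^{G,c}+\K^{G',d}$, so its time-$1$ value is a positive element of $\mathcal{C}_{c+d}$. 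The passage from $\mathcal{C}_\cdot$ to $\overline{\mathcal{C}_\cdot}$ is a two-step bootstrap: fixing any $F_0\in\mathcal{C}_c$, the set $\{G:F_0+G\in\overline{\mathcal{C}_{c+d}}\}$ contains $\mathcal{C}_d$ by the base case and is $\liminf$-closed, because for any positive extended-valued $F_0$ one has $F_0+\liminf_n G_n\le\liminf_n(F_0+G_n)$; it therefore contains $\overline{\mathcal{C}_d}$. Fixing now any $G_0\in\overline{\mathcal{C}_d}$ and repeating the argument on the other coordinate delivers the sub-lemma.

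Given this, (\ref{eq:subadditivity-1}) is bookkeeping. Assuming $\sum_n\UEG(F_n)<\infty$ (otherwise there is nothing to prove), fix $\varepsilon>0$ and choose $c_n$ with $c_n\le\UEG(F_n)+\varepsilon 2^{-n}$ and $F_n\in\overline{\mathcal{C}_{c_n}}$; set $s:=\sum_n c_n$. An induction on $N$ using the sub-lemma gives $\sum_{n=1}^N F_n\in\overline{\mathcal{C}_s}$, provided we also know the monotonicity $\overline{\mathcal{C}_c}\subseteq\overline{\mathcal{C}_{c'}}$ for $c\le c'$; this follows from the observation that any $F\in\mathcal{C}_c$ is dominated by the constant-sequence $\liminf$-majorant $F+(c'-c)\in\mathcal{C}_{c'}$. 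Since $\sum_{n=1}^N F_n\nearrow\sum_{n=1}^\infty F_n$, a final application of $\liminf$-closure yields $\sum_{n=1}^\infty F_n\in\overline{\mathcal{C}_s}$, whence $\UEG(\sum_n F_n)\le s\le\sum_n\UEG(F_n)+\varepsilon$. Letting $\varepsilon\downarrow 0$ concludes. The only genuine technical obstacle is the $\liminf$-closure bootstrap in the sub-lemma; everything else is routine.
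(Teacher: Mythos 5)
Your proof is correct; the treatment of \eqref{eq:subadditivity-2} is essentially the paper's (both combine $\sigma$-subadditivity of the upper integral $\int\cdot\,dP$ with the interchange $\sup_P\sum_n\le\sum_n\sup_P$), but your treatment of \eqref{eq:subadditivity-1} takes a genuinely different route. The paper first reformulates $\overline{\mathcal{C}_c}$ as $\bigcup_\alpha\mathcal{C}_c^\alpha$ via transfinite induction over countable ordinals and then proves the ordinal-indexed implication $(F_i\in\mathcal{C}_{c_i}^\alpha\;\forall i)\Rightarrow\sum_i F_i\in\mathcal{C}_{\sum_i c_i}^\alpha$ by transfinite induction on $\alpha$, finishing off countable subadditivity by invoking Lemma~\ref{lem:Fatou}. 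You instead prove the single sub-lemma $\overline{\mathcal{C}_c}+\overline{\mathcal{C}_d}\subseteq\overline{\mathcal{C}_{c+d}}$ by the two-pass ``bootstrap'': fix one summand, observe that $\{G:F_0+G\in\overline{\mathcal{C}_{c+d}}\}$ is $\liminf$-closed (by super-additivity of $\liminf$) and contains $\mathcal{C}_d$, hence contains $\overline{\mathcal{C}_d}$; then repeat in the other variable. You then assemble the countable sum directly via the monotonicity $\overline{\mathcal{C}_c}\subseteq\overline{\mathcal{C}_{c'}}$ ($c\le c'$) and a final $\liminf$-closure step, without ever invoking ordinals or the Fatou lemma. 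The two-pass argument is a standard and arguably cleaner way to push bilinear-looking statements through a closure operator, and it avoids the rather heavy transfinite machinery. What the paper's ordinal version buys in exchange is the strictly stronger, rank-preserving implication (the $\mathcal{C}^\alpha$-level statement), which the author explicitly flags as being re-used elsewhere in the paper (the transfinite inductions in Subsections~\ref{subsec:ge} and in the domination argument of Reductions~I are of the same shape), so keeping the $\mathcal{C}_c^\alpha$ stratification explicit is a deliberate stylistic choice there rather than a necessity for Lemma~\ref{lem:outer} itself. Your base case (merging stopping times and adding bets, with $\K^{G,c}_1\ge 0$ and $\K^{G',d}_1\ge 0$ guaranteeing positivity of the sum) matches the paper's terse appeal to ``the definition of a simple trading strategy.''
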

\begin{proof}
  We can deduce \eqref{eq:subadditivity-2} from the $\sigma$-subadditivity of $F\mapsto\int F dP$:
  indeed, for each $\epsilon>0$,
  \begin{align*}
    \UEM
    \left(
      \sum_{n=1}^{\infty} F_n
    \right)
    &=
    \sup_P
    \int
      \sum_{n=1}^{\infty} F_n
    dP
    \le
    \int
      \sum_{n=1}^{\infty} F_n
    dP_0
    +
    \epsilon
    \le
    \sum_{n=1}^{\infty}
    \int
      F_n
    dP_0
    +
    \epsilon\\
    &\le
    \sum_{n=1}^{\infty}
    \sup_P
    \int
      F_n
    dP
    +
    \epsilon
    =
    \sum_{n=1}^{\infty}
    \UEM(F_n)
    +
    \epsilon,
  \end{align*}
  where $P_0$ is a martingale measure.

  As for \eqref{eq:subadditivity-1},
  we start from a new definition of $\overline{\mathcal{C}_c}$.
  Define $\mathcal{C}_c^{\alpha}$ by transfinite induction over the countable ordinals $\alpha$
  (see, e.g., \cite{Dellacherie/Meyer:1978}, 0.8) as follows:
  \begin{itemize}
  \item
    $\mathcal{C}_c^0:=\mathcal{C}_c$;
  \item
    for $\alpha>0$,
    $F\in\mathcal{C}_c^{\alpha}$ if and only if there exists a sequence $F_1,F_2,\ldots$
    of functionals in $\mathcal{C}_c^{<\alpha}:=\cup_{\beta<\alpha}\mathcal{C}_c^{\beta}$
    such that \eqref{eq:PP} holds.
  \end{itemize}
  It is easy to check that $\overline{\mathcal{C}_c}$ is the union of the nested family $\mathcal{C}_c^{\alpha}$
  over all countable ordinals $\alpha$.

  First we prove finite subadditivity
  (\eqref{eq:subadditivity-1} with $\infty$ replaced by a natural number),
  which will immediately follow from
  \[
    \left(
      F_i\in\overline{\mathcal{C}_{c_i}},\;
      i=1,\ldots,n
    \right)
    \Longrightarrow
    \left(
      \sum_{i=1}^n F_i \in \overline{\mathcal{C}_{\sum_{i=1}^n c_i}}
    \right).
  \]
  It suffices to prove, for each countable ordinal $\alpha$,
  \begin{equation}\label{eq:alpha}
    \left(
      F_i\in\mathcal{C}^{\alpha}_{c_i},\;
      i=1,\ldots,n
    \right)
    \Longrightarrow
    \left(
      \sum_{i=1}^n F_i \in \mathcal{C}_{\sum_{i=1}^n c_i}^{\alpha}
    \right)
  \end{equation}
  (this is the implication that we will actually need below).
  This is true for $\alpha=0$ (by the definition of a simple trading strategy),
  so we fix a countable ordinal $\alpha>0$
  and assume that the statement holds for all ordinals below $\alpha$.
  Let us also assume the antecedent of \eqref{eq:alpha}.
  For each $i\in\{1,\ldots,n\}$ let $F_i^j\in\mathcal{C}_c^{<\alpha}$, $j=1,2,\ldots$,
  be a sequence such that
  \begin{equation*}
    \forall\omega\in\Omega:
    F_i(\omega)
    \le
    \liminf_{j\to\infty}
    F_i^j(\omega).
  \end{equation*}
  For each $j$, the inductive assumption gives
  \[
    \sum_{i=1}^n F_i^j
    \in
    \mathcal{C}^{<\alpha}_{\sum_{i=1}^n c_i}
  \]
  (since there are finitely many $i$, there is $\beta=\beta_j<\alpha$
  such that $F_i^j\in\mathcal{C}^{\beta}_{c_i}$ for all $i\in\{1,\ldots,n\}$).
  By the definition of $\mathcal{C}^{\alpha}$,
  \[
    \liminf_{j\to\infty}
    \sum_{i=1}^n F_i^j
    \in
    \mathcal{C}^{\alpha}_{\sum_{i=1}^n c_i},
  \]
  which implies, by the Fatou lemma,
  \[
    \sum_{i=1}^n
    \liminf_{j\to\infty}
    F_i^j
    \in
    \mathcal{C}^{\alpha}_{\sum_{i=1}^n c_i},
  \]
  which in turn implies
  \[
    \sum_{i=1}^n F_i
    \in
    \mathcal{C}^{\alpha}_{\sum_{i=1}^n c_i}.
  \]

  The countable subadditivity \eqref{eq:subadditivity-1}
  now follows immediately from Lemma~\ref{lem:Fatou} below:
  \begin{multline*}
    \UEG
    \left(
      \sum_{n=1}^{\infty} F_n
    \right)
    =
    \UEG
    \left(
      \liminf_{N\to\infty}
      \sum_{n=1}^{N} F_n
    \right)
    \le
    \liminf_{N\to\infty}
    \UEG
    \left(
      \sum_{n=1}^{N} F_n
    \right)\\
    \le
    \liminf_{N\to\infty}
    \sum_{n=1}^{N}
    \UEG(F_n)
    =
    \sum_{n=1}^{\infty}
    \UEG(F_n).
    \qquad\qed
  \end{multline*}
  \renewcommand{\qedsymbol}{}
\end{proof}

\begin{remark}
  The original ``broad'' definition of game-theoretic probability and expectation
  in \cite{Perkowski/Promel:2013} is given by \eqref{eq:UEG} with $\mathcal{C}_c^1$
  in place of $\overline{\mathcal{C}_c}$.
\end{remark}

The following lemma (already used in the proof of Lemma~\ref{lem:outer} above)
is the analogue of the Fatou lemma for the broad definition of game-theoretic probability.
\begin{lemma}\label{lem:Fatou}
  For any sequence of positive functionals $F_1,F_2,\ldots$,
  \begin{equation}\label{eq:Fatou}
    \UEG
    \left(
      \liminf_{n\to\infty} F_n
    \right)
    \le
    \liminf_{n\to\infty}
    \UEG(F_n).
  \end{equation}
\end{lemma}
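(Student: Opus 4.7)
My plan is to reduce the statement to the $\liminf$-closedness of $\overline{\mathcal{C}_{L+\epsilon}}$, where $L := \liminf_n \UEG(F_n)$. If $L = \infty$ the inequality is vacuous, so I assume $L < \infty$ and fix an arbitrary $\epsilon > 0$; it suffices to prove $\UEG(\liminf_n F_n) \le L + \epsilon$.

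First I choose a subsequence $n_1 < n_2 < \cdots$ with $\UEG(F_{n_k}) \le L + \epsilon/2$ for all $k$, which is possible by the definition of $\liminf$. Since $\UEG(F_{n_k})$ is defined as an infimum, for each $k$ I can select $c_k \in [0, L + \epsilon]$ with $F_{n_k} \in \overline{\mathcal{C}_{c_k}}$. The constant functional equal to $L + \epsilon - c_k$ is in $\mathcal{C}_{L+\epsilon - c_k}$ (take the trivial simple strategy with all bets equal to zero), so the finite-additivity implication~\eqref{eq:alpha} established inside the proof of Lemma~\ref{lem:outer} yields
\[
  G_k := F_{n_k} + (L + \epsilon - c_k) \in \overline{\mathcal{C}_{L + \epsilon}}.
\]

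Because $L + \epsilon - c_k \ge 0$ for every $k$, pointwise
\[
  \liminf_k G_k(\omega)
  \ge
  \liminf_k F_{n_k}(\omega)
  \ge
  \liminf_n F_n(\omega),
\]
and the $\liminf$-closedness of $\overline{\mathcal{C}_{L+\epsilon}}$ then forces $\liminf_n F_n \in \overline{\mathcal{C}_{L+\epsilon}}$. Hence $\UEG(\liminf_n F_n) \le L + \epsilon$, and letting $\epsilon \downarrow 0$ closes the argument.

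The only delicate point is the ``capital-raising'' step: since the infimum in $\UEG(F_{n_k})$ is not necessarily attained and the chosen $c_k$ can vary with $k$, the $F_{n_k}$ sit a priori in different classes $\overline{\mathcal{C}_{c_k}}$, which prevents a direct appeal to $\liminf$-closedness. Padding each $F_{n_k}$ by the nonnegative constant $L + \epsilon - c_k$ places them all in the single class $\overline{\mathcal{C}_{L+\epsilon}}$ at the cost of a pointwise upward shift that is harmless for the final inequality. This padding uses only the finite additivity across the transfinite hierarchy proved in Lemma~\ref{lem:outer}, so the argument does not circle back through Lemma~\ref{lem:Fatou} itself.
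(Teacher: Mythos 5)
Your proof is correct and follows the same route as the paper: pass to a subsequence on which $\UEG(F_{n_k})$ is uniformly below $L+\epsilon$, place the corresponding functionals in a single class $\overline{\mathcal{C}_{L+\epsilon}}$, and invoke $\liminf$-closedness. The only difference is that you make explicit the monotonicity $\overline{\mathcal{C}_{c'}}\subseteq\overline{\mathcal{C}_{c''}}$ for $c'\le c''$, via padding by a nonnegative constant and the finite additivity~\eqref{eq:alpha}, whereas the paper passes directly from $\UEG(F_{n_i})<c+\epsilon$ to $F_{n_i}\in\overline{\mathcal{C}_{c+\epsilon}}$ without spelling this step out; your observation that \eqref{eq:alpha} is established before, and independently of, Lemma~\ref{lem:Fatou} is also correct, so there is no circularity.
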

\begin{proof}
  Let $c$ be the right-hand side of \eqref{eq:Fatou} and $\epsilon>0$.
  There is a strictly increasing sequence $n_1<n_2<\cdots$ such that $\UEG(F_{n_i})<c+\epsilon$ for all $i$.
  Since $F_{n_i}\in\overline{\mathcal{C}_{c+\epsilon}}$ for all $i$,
  we have $\liminf_{i\to\infty}F_{n_i}\in\overline{\mathcal{C}_{c+\epsilon}}$,
  which implies $\liminf_{n\to\infty}F_{n}\in\overline{\mathcal{C}_{c+\epsilon}}$,
  which in turn implies $\UEG(\liminf_{n\to\infty}F_{n})\le c+\epsilon$.
  Since $\epsilon$ can be made arbitrarily small, this completes the proof.
\end{proof}

\subsection{Inequality $\ge$}\label{subsec:ge}

The goal of this subsection is to prove
\begin{equation}\label{eq:m-le-g}
  \UEM(F)
  \le
  \UEG(F)
\end{equation}
for all functionals $F:\Omega\to[0,\infty]$
(we will not need the assumptions that $F$ is bounded or measurable).

First we will prove
\begin{equation}\label{eq:goal}
  \Expect_P
  \left(
    \K^{G,c}_1
    -
    c
  \right)
  \le
  0
\end{equation}
for all martingale measures $P$,
where $G$ is a simple trading strategy
whose stopping times and bets will be denoted $\tau_1,\tau_2,\ldots$
and $h_1,h_2,\ldots$, respectively,
and $c$ is an initial capital.
Fix such a $P$.
By the Fatou lemma
(applied to the partial sums in \eqref{eq:simple-capital}),
it suffices to prove \eqref{eq:goal}
assuming that the sequence of stopping time is finite:
$\tau_n=\infty$ for all $n>N$ for a given $N\in\mathbb{N}$
(which in turn implies that the bets $h_n$ are bounded in absolute value by a given constant).

For each $k=1,2,\ldots$, set $\tau^k_n:=2^{-k}\lceil 2^k\tau_n\rceil$
and let $\mathfrak{S}^k$ be the simple capital process
corresponding to initial capital $\mathfrak{S}_0^k=c$,
stopping times $\tau^k_n$, and bets $h_n$
(remember that our definition of a simple trading strategy allows $\tau_n=\tau_{n+1}$).
It is easy to check that, for all $k$ and $n=0,\ldots,2^k-1$,
\begin{equation}\label{eq:easy}
  \Expect_P
  \left(
    \mathfrak{S}^k_{(n+1)2^{-k}}
    -
    \mathfrak{S}^k_{n2^{-k}}
  \right)
  =
  0;
\end{equation}
indeed, the difference
$
  \mathfrak{S}^k_{(n+1)2^{-k}}
  -
  \mathfrak{S}^k_{n2^{-k}}
$
is the product of the bounded $\FFF_{n2^{-k}}$-measurable function
\[
  h
  :=
  \sum_{i=1}^N
  h_i
  \III_{\{\tau^k_i=n2^{-k},\tau^k_{i+1}>n2^{-k}\}}
\]
and the martingale difference
$
  \omega((n+1)2^{-k})
  -
  \omega(n2^{-k})
$,
and so
\begin{align*}
  &\Expect_P
  \left(
    \mathfrak{S}^k_{(n+1)2^{-k}}
    -
    \mathfrak{S}^k_{n2^{-k}}
  \right)\\
  &=
  \Expect_{P(d\omega)}
  \left(
    \Expect_{P(d\omega)}
    \left(
      h(\omega)
      \left(
        \omega((n+1)2^{-k})
        -
        \omega(n2^{-k})
      \right)
      \st
      \FFF_{n2^{-k}}
    \right)
  \right)\\
  &=
  \Expect_{P(d\omega)}
  \left(
    h(\omega)
    \Expect_{P(d\omega)}
    \left(
      \left(
        \omega((n+1)2^{-k})
        -
        \omega(n2^{-k})
      \right)
      \st
      \FFF_{n2^{-k}}
    \right)
  \right)
  =
  0.
\end{align*}
Summing \eqref{eq:easy} over $n$ (of which there are finitely many),
\[
  \Expect_P
  \left(
    \mathfrak{S}^k_1
    -
    c
  \right)
  =
  0,
\]
which in turn implies, by the Fatou lemma, \eqref{eq:goal}.

We will complete the proof of \eqref{eq:m-le-g} by transfinite induction,
as in Lemma~\ref{lem:outer}.
Rewrite \eqref{eq:m-le-g} as $\UEM(F)\le c$ for all $F\in\overline{\mathcal{C}_c}$.
Fix $c$ and $F\in\overline{\mathcal{C}_c}$.
In the previous paragraph we checked that $\UEM(F)\le c$ if $F\in\mathcal{C}_c^0$.
Therefore, it remains to prove, for a given countable ordinal $\alpha>0$,
that $\UEM(F)\le c$
assuming that $F\in\mathcal{C}^{\alpha}_c$
and that $\UEM(G)\le c$ for all $G\in\mathcal{C}_c^{<\alpha}$.
Let $F_n\in\mathcal{C}_c^{<\alpha}$, $n=1,2,\ldots$,
be a sequence of functionals such that $F\le\liminf_n F_n$.
Suppose $\UEM(F)>c$ and find a martingale measure $P$ such that $c<\int F dP$.
We get a contradiction by the Fatou lemma and the inductive assumption:
\[
  c
  <
  \int F dP
  \le
  \int
  \liminf_{n\to\infty}
  F_n
  dP
  \le
  \liminf_{n\to\infty}
  \int F_n dP
  \le
  \liminf_{n\to\infty}
  c
  =
  c.
\]

\subsection{Inequality $\le$}\label{subsec:le}

In this section we will prove that
\begin{equation}\label{eq:g-le-m}
  \UEG(F)
  \le
  \UEM(F).
\end{equation}
Since $\UEG(F)$ is defined as an infimum and $\UEM(F)$ as a supremum,
it suffices to construct a martingale measure $P$ and a superhedging capital process
for a given lower semicontinuous positive functional $F$
such that $\int F dP$ is close to (or greater than) the initial capital of the process.

\subsubsection{Reductions I}

The goal of this section is to show that, without loss of generality,
we can assume that the functional $F$ is bounded and lower semicontinuous in a stronger sense.

For a general lower semicontinuous $F:\Omega\to[0,\infty]$ and $n\in\mathbb{N}$,
set $F_n(\omega):=F(\omega)\wedge n$.
Assuming $\UEG(F_n)\le\UEM(F_n)$ for all $n$, let us prove $\UEG(F)\le\UEM(F)$.
Set $c:=\UEM(F)+\epsilon$ for a small $\epsilon>0$.
Since
\[
  \UEG(F_n)\le\UEM(F_n)\le\UEM(F)<c,
\]
we have $F_n\in\overline{\mathcal{C}_c}$.
Since $\overline{\mathcal{C}_c}$ is $\liminf$-closed,
we have
\[
  F
  =
  \liminf_{n\to\infty}
  F_n
  \in
  \overline{\mathcal{C}_c}
\]
and, therefore, $\UEG(F)\le c$.
Since $\epsilon$ can be arbitrarily small,
this completes the proof of $\UEG(F)\le\UEM(F)$.
Therefore, we can, and will, assume that $F$ is bounded above.

In the rest of this paper, instead of the uniform metric \eqref{eq:metric-Omega}
we will consider the Hausdorff metric
\begin{multline}\label{eq:Hausdorff}
  \rho_H(\omega,\omega')
  :=
  H(\bar\omega,\bar{\omega'})
  :=
  \adjustlimits
  \sup_{(t,x)\in\bar\omega}
  \inf_{(t',x')\in\bar{\omega'}}
  \left\|
    \left(
      t-t',x-x'
    \right)
  \right\|\\
  \vee
  \adjustlimits
  \sup_{(t',x')\in\bar{\omega'}}
  \inf_{(t,x)\in\bar\omega}
  \left\|
    \left(
      t-t',x-x'
    \right)
  \right\|,
\end{multline}
where $\left\|\cdot\right\|=\left\|\cdot\right\|_{\infty}$
stands for the $\ell_{\infty}$ norm
$\left\|(a,b)\right\|:=\left|a\right|\vee\left|b\right|$
in $\mathbb{R}^2$
and each element $\omega$ of $\Omega$ is mapped to the set
$\bar\omega\subseteq[0,1]\times[0,\infty)$
defined to be the union $\graph(\omega)\cup\{1\}\times[0,\infty)$
of the graph of $\omega$ and the ray $\{1\}\times[0,\infty)$.

\begin{remark}\label{rem:Hausdorff}
  Notice that the metrics $\rho_U$ and $\rho_H$ lead to different topologies:
  e.g., there is an unbounded sequence $\omega_n$ of elements of $\Omega$
  such that $\omega_n\to0$ in $\rho_H$.
  The $\ell_{\infty}$ norm (used in our definition of $\rho_H$) is, of course,
  equivalent to the Euclidean norm $\ell_2$,
  but sometimes it leads to slightly simpler formulas.
  An example of a functional $F:\Omega\to\mathbb{R}$ continuous in $\rho_H$
  is $F(\omega):=F'(\omega|_{[0,1-\epsilon]})$,
  where $F'$ is a functional on $C^+_1[0,1-\epsilon]$ continuous in the uniform metric
  and $\epsilon\in(0,1)$ is a strictly positive constant.
\end{remark}

\begin{remark}
  On the other hand,
  the topologies generated by the metrics $\rho_U$ and $\rho_H$
  lead to the same Borel $\sigma$-algebra.
  Since the topology generated by $\rho_U$ is finer than the one generated by $\rho_H$,
  it suffices to check that every $\rho_U$-Borel set is a $\rho_H$-Borel set.
  Since the $\rho_U$-topology is separable,
  it suffices to check that every open ball in $\rho_U$ is a $\rho_H$-Borel set.
  This is easy; moreover,
  every open ball in $\rho_U$ is the intersection of a sequence of $\rho_H$-open sets.
\end{remark}

Let us check that in Theorem~\ref{thm:main} we can further assume that
$F$ is lower semicontinuous in the Hausdorff topology on $\Omega$
(this observation develops the end of Remark~\ref{rem:Hausdorff}).
Suppose that Theorem~\ref{thm:main} holds for all (bounded) positive functionals
that are lower semicontinuous in the Hausdorff topology.
It is clear that we can replace the sample space $\Omega$
by the sample space $\Omega^*:=C^+_1[0,2]$;
let us do so.
Now let $F$ be a lower semicontinuous (in the usual uniform topology) positive functional on $\Omega$.
Define
\[
  F^*(\omega)
  :=
  F(\omega|_{[0,1]}),
  \quad
  \omega\in\Omega^*.
\]
Then $F^*$ is lower semicontinuous in the Hausdorff metric on $\Omega^*$
(defined by \eqref{eq:Hausdorff} where the ray $\{1\}\times[0,\infty)$ in the definition of $\bar\omega$
is replaced by $\{2\}\times[0,\infty)$).
Indeed, for any constant $c$, the set $\{F^*>c\}$ is open:
if $\omega_n\to\omega$ in the Hausdorff metric on $C^+_1[0,2]$,
then $\omega_n|_{[0,1]}\to\omega|_{[0,1]}$ in the usual topology
(had $\omega_n|_{[0,1]}$ not converged to $\omega|_{[0,1]}$ in the usual topology,
we could have found $\epsilon>0$ and $t_n\in[0,1]$
such that $\left|\omega_n(t_n)-\omega(t_n)\right|>\epsilon$ for infinitely many $n$
and arrived at a contradiction by considering a limit point of those $t_n$),
and so
\[
  \forall n:F^*(\omega_n)\le c
  \Longleftrightarrow
  \forall n:F(\omega_n|_{[0,1]})\le c
  \Longrightarrow
  F(\omega|_{[0,1]})\le c
  \Longleftrightarrow
  F^*(\omega)\le c.
\]
Therefore, our assumption
(the non-trivial part of Theorem~\ref{thm:main} for the Hausdorff metric) gives
\begin{equation*}
  \UEG(F^*)
  \le
  \UEM(F^*),
\end{equation*}
and it suffices to prove $\UEG(F)\le\UEG(F^*)$ and $\UEM(F^*)\le\UEM(F)$.

First let us check that $\UEG(F)\le\UEG(F^*)$.
This follows from the class $\overline{\mathcal{C}_c}$
dominating the class $\overline{\mathcal{C}^*_c}$
for all $c>0$,
where the class $\mathcal{C}_c$ is as defined above,
the class $\mathcal{C}^*_c$ is the analogue of this class
for the time interval $[0,2]$ rather than $[0,1]$,
and a class $\mathcal{A}$ of functionals on $\Omega$ is said to \emph{dominate}
a class $\mathcal{B}$ of functionals on $\Omega^*$
if for any $G\in\mathcal{B}$ there exists $G'\in\mathcal{A}$
that \emph{dominates} $G$ in the sense that,
for any $\omega\in\Omega$,
\[
  G'(\omega)
  \ge
  \Expect_{W(d\xi)}(G(\omega\xi))
\]
where $W$ is the Wiener measure on $C_0[0,1]$
and $\omega\xi:[0,2]\to[0,\infty)$ is the continuous combination of $\omega$ and $\xi$
defined as follows:
\[
  (\omega\xi)(t)
  :=
  \begin{cases}
    \omega(t) & \text{if $t\in[0,1]$}\\
    \omega(1)+\xi(t-1) & \text{if $t\in[1,\tau]$}\\
    0 & \text{if $t\in(\tau,1]$}
  \end{cases}
\]
where
\[
  \tau
  :=
  \inf\{t\in[1,2]\st\omega(1)+\xi(t-1)=0\}
\]
with $\inf\emptyset:=2$.
Indeed, assuming that $\overline{\mathcal{C}_c}$ dominates $\overline{\mathcal{C}^*_c}$
for all $c>0$,
we obtain
\begin{equation*}
  \UEG(F)
  =
  \inf
  \left\{
    c
    \st
    F \in \overline{\mathcal{C}_c}
  \right\}
  \le
  \inf
  \left\{
    c
    \st
    F^* \in \overline{\mathcal{C}^*_c}
  \right\}
  =
  \UEG(F^*),
\end{equation*}
where the inequality follows from the fact that,
whenever $F^* \in \overline{\mathcal{C}^*_c}$,
$F^*$ is dominated by some $G\in\overline{\mathcal{C}_c}$,
which implies
\[
  \forall\omega\in\Omega:
  F(\omega)
  =
  \Expect_{W(d\xi)}(F^*(\omega\xi))
  \le
  G(\omega),
\]
which in turn implies $F\in\overline{\mathcal{C}_c}$.
Therefore, it remains to prove that $\overline{\mathcal{C}_c}\sqsupseteq\overline{\mathcal{C}^*_c}$,
where $\mathcal{A}\sqsupseteq\mathcal{B}$ stands for ``$\mathcal{A}$ dominates $\mathcal{B}$''.
Let us fix $c>0$.
Our proof is by transfinite induction.
The basis of induction $\mathcal{C}_c^0\sqsupseteq\mathcal{C}_c^{*0}$
follows from the fact that $\K^{G,c}_2$ is always dominated by $\K^{G,c}_1$:
indeed, for a fixed $\omega\in\Omega$,
any simple capital process $\K^{G,c}_t(\omega\xi)$ over $\Omega^*$
is a supermartingale over $t\in[1,2]$
(see the beginning of the proof of Lemma~6.4 in \cite{\CTIV}),
where $\xi\sim W$, as above.
It remains to prove that $\mathcal{C}_c^{\alpha}\sqsupseteq\mathcal{C}_c^{*\alpha}$
for each countable ordinal $\alpha>0$
assuming $\mathcal{C}_c^{\beta}\sqsupseteq\mathcal{C}_c^{*\beta}$
for each $\beta<\alpha$.
Let us make this assumption and let $G\in\mathcal{C}_c^{*\alpha}$.
Find a sequence of functionals $G_n\in\mathcal{C}_c^{*<\alpha}$, $n=1,2,\ldots$,
such that $G\le\liminf_{n\to\infty}G_n$.
By the inductive assumption,
for each $n$ there is $G'_n\in\mathcal{C}_c^{<\alpha}$ that dominates $G$.
By the Fatou lemma we now have, for each $\omega\in\Omega$,
\begin{multline*}
  \Expect_{W(d\xi)}(G(\omega\xi))
  \le
  \Expect_{W(d\xi)}(\liminf_{n\to\infty}G_n(\omega\xi))\\
  \le
  \liminf_{n\to\infty}\Expect_{W(d\xi)}(G_n(\omega\xi))
  \le
  \liminf_{n\to\infty}G'_n(\omega).
\end{multline*}
In other words, $G':=\liminf_{n\to\infty}G'_n\in\mathcal{C}_c^{\alpha}$ dominates $G$.
This completes the proof of $\UEG(F)\le\UEG(F^*)$.

To check that $\UEM(F)\ge\UEM(F^*)$, i.e.,
\[
  \sup_P
  \int F dP
  \ge
  \sup_{P^*}
  \int F^* dP^*,
\]
where $P$ ranges over the martingale measures on $\Omega$
and $P^*$ over the martingale measures on $\Omega^*$,
it suffices to notice that for any $P^*$ we can take as $P$ the martingale measure defined by
\[
  P(E)
  :=
  P^*
  \left(
    \bigl\{
      \omega\in\Omega^*
      \st
      \omega_{[0,1]}\in E
    \bigr\}
  \right)
\]
for all measurable $E\subseteq\Omega$
(essentially, the restriction of $P^*$ to cylinder sets in $\Omega^*$).

From now on $F$ is assumed bounded and lower semicontinuous in the Hausdorff metric.

\subsubsection{Reductions II}

We further simplify the functional $F$
analogously to the series of reductions in \cite{\CTIV}, Section~10.
We will modify the notation of \cite{\CTIV}
and write $\tilde\omega$ for $\ntt(\omega)$
(as defined in Section~5 of \cite{\CTIV})
and $\phi_s$ for $\tau_s$ (also defined in Section~5 of \cite{\CTIV}).
Let the domain of $\tilde\omega$ be $[0,D(\omega)]$ or $[0,D(\omega))$\label{p:D}
(it has this form for typical $\omega\in\Omega$).

Let $\Omega''$ be the family of all sets of the form $A\cup\{1\}\times[0,\infty)$
where $A\subseteq[0,1]\times[0,\infty)$ is a bounded closed set
and $\Omega'\subseteq\Omega''$ be the set of all $A\in\Omega''$ satisfying
\begin{itemize}
\item
  each vertical cut $A^t:=\{a\st(t,a)\in A\}\subseteq[0,\infty)$ of $A$, where $t\in[0,1)$,
  is non-empty and connected (i.e., is a closed interval);
\item
  $A^0\ni 1$ (and, automatically, $A^1=[0,\infty)$).
\end{itemize}

\begin{lemma}
  The set $\Omega'$ is closed in $\Omega''$
  (equipped with the Hausdorff metric).
\end{lemma}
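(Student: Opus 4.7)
The plan is to verify, for any $A\in\Omega''$ that arises as a Hausdorff limit of a sequence $A_n\in\Omega'$, the three defining conditions of $\Omega'$: (i) $(0,1)\in A$; (ii) $A^t\ne\emptyset$ for every $t\in[0,1)$; (iii) each such $A^t$ is a closed interval.

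Conditions (i) and (ii) should be quick. For (i), the point $(0,1)$ lies in every $A_n$, hence $d((0,1),A)\le\rho_H(A_n,A)\to 0$ and closedness of $A$ gives $(0,1)\in A$. For (ii), I would fix $t\in[0,1)$ and first observe that the presence of the ray $\{1\}\times[0,\infty)$ forces a uniform height bound on the non-ray part of $A_n$: any $(s,y)\in A_n$ with $s\le 1-\eta$ is at $\ell_\infty$ distance at least $\eta$ from the ray, so once $\rho_H(A_n,A)<\eta$ the nearest point of $A$ to $(s,y)$ must lie in the bounded non-ray piece of $A$, which forces $y$ to be controlled uniformly in $n$. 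Picking $a_n\in A_n^t$, this bound lets me extract a convergent subsequence $a_n\to a$, and the convergence $(t,a_n)\to(t,a)$ together with $(t,a_n)\in A_n$ then yields $(t,a)\in A$, so $a\in A^t$.

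The main work is (iii), which I would prove by contradiction. If $A^t$ has a gap, there exist $a<c<b$ with $a,b\in A^t$ but $c\notin A^t$. Since $(t,c)\notin A$ and $A$ is closed, there exist $\epsilon,\delta>0$ with $\epsilon<\min(c-a,b-c)$ and $\delta<1-t$ such that the rectangle $[t-\delta,t+\delta]\times[c-\epsilon,c+\epsilon]$ is disjoint from $A$; Hausdorff convergence then makes the shrunken rectangle $[t-\delta/2,t+\delta/2]\times[c-\epsilon/2,c+\epsilon/2]$ disjoint from $A_n$ for all sufficiently large $n$. For each $s$ in the base interval $I:=[t-\delta/2,t+\delta/2]$ the cut $A_n^s$ is a non-empty closed interval avoiding the middle band $[c-\epsilon/2,c+\epsilon/2]$, so it lies either strictly below $c-\epsilon/2$ or strictly above $c+\epsilon/2$. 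The map $s\mapsto\sup A_n^s$ is upper semicontinuous and $s\mapsto\inf A_n^s$ is lower semicontinuous (immediate from $A_n$ being closed with interval cuts), so the ``below'' set $\{s\in I:\sup A_n^s<c-\epsilon/2\}$ and the ``above'' set $\{s\in I:\inf A_n^s>c+\epsilon/2\}$ are both open in $I$; being disjoint and covering $I$, one of them must equal all of $I$ by connectedness. But Hausdorff convergence also supplies $(s_n^a,\alpha_n),(s_n^b,\beta_n)\in A_n$ converging to $(t,a)$ and $(t,b)$ respectively, and for $n$ large both lie over $I$ with $\alpha_n<c-\epsilon/2$ and $\beta_n>c+\epsilon/2$, which puts $A_n^{s_n^a}$ on the below side and $A_n^{s_n^b}$ on the above side, the desired contradiction.

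The main obstacle is step (iii): one has to convert the hole in the limiting fiber $A^t$ into a genuine rectangular forbidden region that survives the passage to the approximants without losing height or width, and then use the structural interval-cut hypothesis together with semicontinuity of the boundary functions to pin every cut over $I$ to a single side. The other steps are elementary manipulations with Hausdorff distance and closedness.
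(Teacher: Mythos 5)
Your proof is correct and follows essentially the same route as the paper: a forbidden rectangle around the gap in $A^t$, transferred to the approximants $A_n$ by Hausdorff convergence, followed by a connectedness argument on the base interval that pins every cut to one side. The only cosmetic difference is that you phrase the final dichotomy through openness of the sublevel/superlevel sets of the semicontinuous boundary functions $\sup A_n^s$ and $\inf A_n^s$, whereas the paper uses the dual formulation via closedness of the sets where $A_n^s$ meets the region above (resp.\ below) the gap level.
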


\begin{proof}
  Let $A_n\to A$ for some $A_n\in\Omega'$, $n=1,2,\ldots$, and $A\in\Omega''$;
  our goal is to prove $A\in\Omega'$.
  Let $B>0$ be such that $A\subseteq[0,1)\times[0,B]\cup\{1\}\times[0,\infty)$.
  First we check that each cut of $A$ is non-empty:
  indeed, suppose $A^t=\emptyset$ for $t\in(0,1)$ (the case $t\in\{0,1\}$ is trivial);
  since $[0,B]$ is compact, this implies $A^{t'}=\emptyset$ for all $t'$ in a neighbourhood of $t$;
  therefore, $(A')^t=\emptyset$ for all $A'$ in a Hausdorff neighbourhood of $A$.
  Now suppose there is $t\in[0,1)$
  (the case $t=0$ will be also covered by our argument)
  such that $A^t$ is not connected, say $A^t$ contains points both above and below $b\in(0,B)\setminus A^t$.
  Let $O$ be a connected open neighbourhood of $t$ and $\delta$ be a strictly positive constant
  such that, for all $s\in O$,
  $A^s$ contains points below $b-\delta$ or points above $b+\delta$
  but does not contain points in $(b-\delta,b+\delta)$.
  Choose another connected open neighbourhood $O'$ of $t$ such that $\overline{O'}\subseteq O$.
  Let $O_n^+$ be the set of $s\in O'$ such that $A_n^s$ contains points above $b$
  and $O_n^-$ be the set of $s\in O'$ such that $A_n^s$ contains points below $b$.
  Since, for sufficiently large $n$,
  $O_n^+$ and $O_n^-$ are disjoint sets that are closed in $O'$
  (closed in $O'$ by the compactness of $[0,b]$ and $[b,B]$)
  and $O'$ is connected,
  either $O'=O_n^+$ or $O'=O_n^-$.
  This makes $A_n\to A$ impossible.
  The remaining condition, $A^0\ni 1$,
  is obvious.
\end{proof}

Now it is easy to see that $\Omega'$ is the closure of $\bar\Omega:=\{\bar\omega\st\omega\in\Omega\}$
in $\Omega''$.

We extend the functional $F$ to the set $\Omega'$ by setting
\begin{equation*}
  F'(A)
  :=
  \liminf_{\omega\fgerightarrow A}
  F(\omega),
\end{equation*}
where $\omega$ ranges over $\Omega$
and $\omega\fgerightarrow A$ is the convergence in the sense of the ``one-sided Hausdorff metric''
(defined in terms of $\ell_{\infty}$, as always in this paper):
namely, the $\epsilon$-neighbourhood of $A$ is the set of $\omega\in\Omega$ such that
\[
  \adjustlimits
  \sup_{(t,a)\in\graph(\omega)}
  \inf_{(t',a')\in A}
  \left|t-t'\right|
  \vee
  \left|a-a'\right|
   <
   \epsilon,
\]
and
$
  \liminf_{\omega\fgerightarrow A}
  F(\omega)
$
is the limit of the infimum of $F$ over the $\epsilon$-neighbourhood of $A$ as $\epsilon\to0$.
Since no $\epsilon$-neighbourhood of $A\in\Omega'$ is empty for $\epsilon>0$
(see Lemma~\ref{lem:non-empty} below),
$F'$ takes values in $[0,\sup F]$.
Notice that $F'$ is monotonic: $F'(A)\ge F'(B)$ when $A\subseteq B$.

\begin{lemma}\label{lem:non-empty}
  Let $A\in\Omega'$ and $\epsilon>0$.
  The $\epsilon$-neighbourhood of $A$ is not empty.
\end{lemma}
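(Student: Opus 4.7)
My plan is to construct $\omega\in\Omega$ by piecewise linear interpolation between points of $A$. Partition $[0,1]$ as $0=t_0<t_1<\cdots<t_N=1$ with each $t_i-t_{i-1}<\epsilon$, set $a_0:=1\in A^0$ using the hypothesis $A^0\ni 1$, and for each $i\ge 1$ pick any $a_i\in A^{t_i}$ (non-empty by the definition of $\Omega'$, with $A^{t_N}=A^1=[0,\infty)$). The piecewise linear interpolant $\omega:[0,1]\to[0,\infty)$ through the points $(t_i,a_i)$ is continuous and satisfies $\omega(0)=1$, hence belongs to $\Omega$.

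To show $\omega$ lies in the $\epsilon$-neighbourhood of $A$, I need to verify that, for every $t\in[0,1]$, the point $(t,\omega(t))$ is at $\ell_\infty$-distance less than $\epsilon$ from some point of $A$. Fix $t\in[t_{i-1},t_i]$. The key claim, which I expect to be the main obstacle, is that the \emph{swept cut} $S_i:=\bigcup_{t'\in[t_{i-1},t_i]}A^{t'}$ is a single (possibly unbounded) closed interval in $[0,\infty)$. Granting this, $S_i$ contains both $a_{i-1}$ and $a_i$ and hence the full range $[\min(a_{i-1},a_i),\max(a_{i-1},a_i)]$ of values that $\omega$ takes on $[t_{i-1},t_i]$; so there exists $t^*\in[t_{i-1},t_i]$ with $\omega(t)\in A^{t^*}$, making $(t^*,\omega(t))\in A$ within $\ell_\infty$-distance $|t-t^*|\le t_i-t_{i-1}<\epsilon$ of $(t,\omega(t))$.

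For the swept-cut claim, first take $i$ with $t_i<1$, so that $A\cap[t_{i-1},t_i]\times[0,\infty)$ is compact (using boundedness of $A_0$) and its projection $S_i$ onto the second coordinate is compact. If $S_i=I_1\sqcup I_2$ is a decomposition into disjoint non-empty closed subsets, these are compact with positive separation. Connectedness of each cut $A^{t'}$ forces $A^{t'}\subseteq I_1$ or $A^{t'}\subseteq I_2$, partitioning $[t_{i-1},t_i]=T_1\sqcup T_2$ with both $T_j$ non-empty (any $a\in I_j$ lies in some $A^{t'}$, which is then forced entirely into $I_j$ by connectedness). Upper hemicontinuity of $t'\mapsto A^{t'}$, an immediate consequence of $A$ being closed with compact cuts, makes each $T_j$ open in $[t_{i-1},t_i]$, contradicting connectedness of the interval. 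The final interval with $t_N=1$ is trivial: $S_N=[0,\infty)$ is already an interval, and taking $a_N:=a_{N-1}$ makes $\omega$ constant on $[t_{N-1},1]$, where each $(t,a_{N-1})$ is within distance $1-t<\epsilon$ of $(1,a_{N-1})\in\{1\}\times[0,\infty)\subseteq A$.
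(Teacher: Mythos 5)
Your proof is correct and takes a genuinely different route from the paper's. The paper discretizes both time and space: it lays down a grid of squares of side at most~$\epsilon$, marks those meeting~$A$, and shows (by an interval-connectedness argument close in spirit to yours) that the marked squares in each column form a contiguous block and that these blocks overlap between adjacent columns, so one can travel continuously from $(0,1)$ to the line $t=1$ through marked squares. You discretize only in time, choose anchor values $a_i\in A^{t_i}$, and linearly interpolate, pushing all the work into the clean claim that the swept cut $S_i=\bigcup_{t'\in[t_{i-1},t_i]}A^{t'}$ is an interval. Your proof of that claim --- connectedness of the cuts plus connectedness of $[t_{i-1},t_i]$ plus upper hemicontinuity of $t'\mapsto A^{t'}$, the last being the closed-graph property together with compactness (valid on $[t_{i-1},t_i]$ once $t_i<1$) --- is sound; so is the separate treatment of the last interval using the ray. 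The trade-off: your version produces an explicit piecewise-linear $\omega$ and isolates the conceptual core, at the price of invoking a little extra topological machinery (hemicontinuity, compactness of $S_i$), whereas the paper's grid argument is more elementary and self-contained but leaves the actual construction of $\omega$ implicit. Two cosmetic points: the symbol $A_0$ is never introduced (you mean the bounded closed part of $A$ in the definition of $\Omega''$), and since the $\epsilon$-neighbourhood uses a strict inequality you should note that $\max_i(t_i-t_{i-1})<\epsilon$ holds because the partition is finite, so the supremum over $\graph(\omega)$ is indeed $<\epsilon$.
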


\begin{proof}
  Draw parallel vertical lines $t=i/n$, $i=0,\ldots,n$,
  at regular intervals in the semi-infinite region $[0,1]\times[0,\infty)$
  of the $(t,a)$-plane starting from $t=0$ and ending at $t=1$;
  the interval $1/n$ between the lines should be at most $\epsilon$:
  $1/n\le\epsilon$.
  Similarly, draw parallel horizontal lines $a=i/n$, $i=0,1,\ldots$,
  at regular intervals in the same semi-infinite region $[0,1]\times[0,\infty)$
  starting from $a=0$.
  The region $[0,1]\times[0,\infty)$ will be split into squares of size at most $\epsilon\times\epsilon$;
  these squares can be partitioned into columns
  (each column consisting of squares with equal $t$-coordinates).
  Let us mark the squares whose intersection with $A$ is non-empty.
  It suffices to prove that in each column the marked squares
  form a contiguous array
  and that these arrays overlap for each pair of adjacent columns:
  indeed, in this case we will be able to travel in a continuous manner
  from the point $(0,1)$   to the line $t=1$ via marked squares.

  Suppose there is an unmarked square
  such that there is a point $(t',a')\in A$ in a square below it (in the same column)
  and there is a point $(t'',a'')\in A$ in a square above it (in the same column).
  (Notice that this unmarked square cannot be in the right-most column,
  and so the column containing the unmarked square can be regarded as bounded
  since $A$ is bounded, apart from the line $t=1$.)
  Suppose, for concreteness, $t'<t''$.
  All $t\in[t',t'']$ are now split into two disjoint closed sets:
  those for which there are $(t,a)\in A$ for $a$ above the unmarked square
  and those for which there are $(t,a)\in A$ for $a$ below the unmarked square.
  Since $[t',t'']$ is connected,
  one of those disjoint closed sets is empty,
  and we have arrived at a contradiction.

  Now it is obvious that the arrays of marked squares overlap for each pair of adjacent columns:
  remember that the intersection of $A$ with the vertical line between the two columns
  is non-empty and connected.
\end{proof}

Let us check that $F':\Omega'\to[0,\infty)$ is lower semicontinuous
and that $F'(\bar\omega)=F(\omega)$ for all $\omega\in\Omega$;
the latter property can be written as $F'|_{\Omega}=F$,
where $F'|_{\Omega}:\Omega\to[0,\infty)$
is defined by $F'|_{\Omega}(\omega):=F'(\bar\omega)$.
Indeed:
\begin{itemize}
\item
  Let $c:=F'(A)$ and $\epsilon>0$;
  we are required to prove that $F'(B)\ge c-\epsilon$ for all $B$ in an open Hausdorff ball around $A$.
  Let $\delta>0$ be so small that $F(\omega)>c-\epsilon$ for all $\omega\in\Omega$
  in the $\delta$-neighbourhood of $A$.
  Let $B$ be in the open $\delta/2$-ball around $A$ (in the sense of the Hausdorff metric).
  If $\omega$ is in the $\delta/2$-neighbourhood of $B$,
  then $\omega$ will be in the $\delta$-neighbourhood of $A$,
  and so $F(\omega)>c-\epsilon$.
  Therefore, for such $B$ we have $F'(B)\ge c-\epsilon$.
\item
  Let $\omega\in\Omega$.
  We have $F'(\bar\omega)\le F(\omega)$ since $\omega$ is in the $\epsilon$-neighbourhood of $\bar\omega$
  for any $\epsilon>0$.
  And the inequality $F'(\bar\omega)\ge F(\omega)$ follows from the lower semicontinuity of $F$ on $\Omega$
  (in the metric $\rho_H$)
  and the fact that $\omega_n\fgerightarrow\bar\omega$ implies $\rho_H(\omega_n,\omega)\to0$.
  To check the last statement, suppose that there is a subsequence of $\omega_n$
  such that $\rho_H(\omega_{n},\omega)\ge\epsilon$ for the subsequence,
  where $\epsilon>0$;
  without loss of generality we can assume that for each element of the subsequence
  there is a point $(t_n,a_n)\in\graph(\omega)$ such that $t_n\le1-\epsilon$
  and there are no points of $\graph(\omega_n)$
  in the square $[t_n-\epsilon,t_n+\epsilon]\times[a_n-\epsilon,a_n+\epsilon]$.
  Let $(t,a)$ be a limit point of $(t_n,a_n)$, which obviously exists and belongs to $\graph(\omega)$.
  There is another subsequence of $\omega_n$ for which there are no points of $\graph(\omega_n)$
  in the square $[t-\epsilon/2,t+\epsilon/2]\times[a-\epsilon/2,a+\epsilon/2]$.
  This contradicts $\omega_n\fgerightarrow\bar\omega$:
  the distance from $(t,\omega_n(t))$ to any point of $\graph(\omega)$
  stays above a strictly positive constant as $n\to\infty$.
\end{itemize}

Let us now check that we can assume $F=F'|_{\Omega}$ where $F':\Omega'\to[0,\infty)$
is continuous (in the Hausdorff metric).
First suppose \eqref{eq:g-le-m} holds for the restrictions to $\Omega$
of all continuous functions of the type $\Omega'\to[0,\infty)$,
but we are given $F=F'|_{\Omega}$ for $F'$ that is only lower semicontinuous.
Each lower semicontinuous function on a metric space
(such as $\Omega'$ with the Hausdorff metric)
is the limit of an increasing sequence of continuous functions (see, e.g., \cite{Engelking:1989},
1.7.15(c)),
so we can find an increasing sequence of continuous functionals $F_n\nearrow F'$ on $\Omega'$.
Let $\epsilon>0$.
For each $n$, by assumption we have $F_n|_{\Omega}\in\overline{\mathcal{C}_c}$
where $c:=\UEM(F)+\epsilon>\UEM(F_n|_{\Omega})$.
Since $\overline{\mathcal{C}_c}$ is $\liminf$-closed,
\[
  F
  =
  \liminf_{n\to\infty}
  F_n|_{\Omega}
  \in
  \overline{\mathcal{C}_c}.
\]
Therefore, $\UEG(F)\le c=\UEM(F)+\epsilon$ and so, since $\epsilon$ can be arbitrarily small,
$\UEG(F)\le\UEM(F)$.

Let us check that we can replace our new assumption of continuity by the assumption that
$F$ depends on $\omega\in\Omega$
only via the values $\tilde\omega(iS/N)$ and $\phi_{iS/N}(\omega)$,
$i=1,\ldots,N$ (remember that we are interested in the case $\tilde\omega(0)=\omega(0)=1$),
for some $S>0$ and some $N\in\mathbb{N}$
(in particular, only via $\tilde\omega|_{[0,S]}$ and $\phi(\omega)|_{[0,S]}$).
We ignore events of zero upper game-theoretic probability
(such as the event that $\tilde\omega$ does not exist).
Let $\epsilon>0$ and let $S$ and $N$ be sufficiently large
(we will explain later how large $S$ and $N$ should be for a given $\epsilon$).
Let $A_1\subseteq\Omega$
consist of all $\omega\in\Omega$ such that $D(\omega)>S$
($D(\omega)$ is defined at the beginning of this subsubsection
on p.~\pageref{p:D}).
Take $S$ so large that the probability that a Brownian motion started from $1$ at time 0
is positive over the time interval $[0,S]$ is less than $\epsilon$.

Let $\mathfrak{K}\subseteq C_1[0,S]$ be a compact set
whose Wiener measure
(the distribution of a Brownian motion $W^1$ on $C[0,S]$ starting from 1)
is more than $1-\epsilon$.
Let $f$ be the optimal modulus of continuity for all $\psi\in\mathfrak{K}$:
\[
  f(\delta)
  :=
  \sup_{\substack{(t_1,t_2)\in[0,S]^2:\left|t_1-t_2\right|\le\delta,\\\psi\in\mathfrak{K}}}
  \lvert\psi(t_1)-\psi(t_2)\rvert,
  \quad
  \delta>0;
\]
$f$ is an increasing function, $f(a+b)\le f(a)+f(b)$ for all $a,b\in[0,\infty)$,
and we know that $\lim_{\delta\to0}f(\delta)=0$
(cf.\ the Arzel\`a--Ascoli theorem).
Extend $\mathfrak{K}$ by including in it all $\omega\in C_1[0,S]$
with $f$ as a modulus of continuity;
$\mathfrak{K}$ will stay compact with $W^1(\mathfrak{K})>1-\epsilon$.
Let $A_2:=\{\omega\in\Omega\st{\tilde\omega}|_{[0,S]}\notin\mathfrak{K}\}$,
where ${\tilde\omega}|_{[0,S]}(t):={\tilde\omega}(D(\omega))$ for $t$ such that $D(\omega)\le t\le S$.

Set $B:=1+f(S)$;
notice that $\sup\omega\le B$ for all $\omega\in\Omega\setminus(A_1\cup A_2)$.

Define $D^{S,f}_N\subseteq[0,B]^{N}\times[0,1]^{N}$
to be the set of all sequences
\[(x_1,\ldots,x_N;v_1,\ldots,v_N)\in[0,B]^{N}\times[0,1]^{N}\]
satisfying
\begin{equation}\label{eq:D}
  \begin{cases}
    v_0:=0\le v_1\le\cdots\le v_N\le v_{N+1}:=1,\\
    \left|x_{j}-x_{i}\right|\le f((j-i)S/N)
    \text{ for all $i,j\in\{0,\ldots,N\}$ such that $i<j$},
  \end{cases}
\end{equation}
where $x_0:=1$
(notice that we do not require $v_i<v_{i+1}$ when $v_{i+1}<1$,
in order to make the set \eqref{eq:D} closed).
Define a function $U^{S,f}_N:D^{S,f}_N\to[0,\sup F]$ by
\begin{equation}\label{eq:U}
  U^{S,f}_N(x_1,\ldots,x_N;v_1,\ldots,v_N)
  :=
  F'
  \left(
    A^{S,f}_N(x_1,\ldots,x_N;v_1,\ldots,v_N)
  \right),
\end{equation}
where $F'$ is the continuous function on $\Omega'$ defined earlier
and the set $A:=A^{S,f}_N(x_1,\ldots,x_N;v_1,\ldots,v_N)\in\Omega'$ is defined by the following conditions:
\begin{itemize}
\item
  for all $i\in\{0,\ldots,N\}$ and $t\in(v_i,v_{i+1})$,
  \[
    A^t
    =
    [x_i\wedge x_{i+1}-f(S/N),x_i\vee x_{i+1}+f(S/N)]
  \]
  (with $x_i\wedge x_{i+1}=x_i\vee x_{i+1}:=x_N$ when $i=N$);
\item
  for all $i,j\in\{0,\ldots,N\}$ such that $i<j$ and $v_{i}<t:=v_{i+1}=v_{i+2}=\cdots=v_{j}<v_{j+1}$,
  \[
    A^t
    =
    \left[
      \bigwedge_{k=i}^{j+1} x_{k}-f(S/N),
      \bigvee_{k=i}^{j+1} x_{k}+f(S/N)
    \right];
  \]
\item
  $
    A^0
    =
    [1\wedge x_{1}-f(S/N),1\vee x_{1}+f(S/N)]
  $;
\item
  $A^1=[0,\infty)$.
\end{itemize}
Therefore, $A$ consists of a sequence of horizontal slabs of width at least $2f(S/N)$
separated by vertical lines.
This set contains $\{1\}\times[0,\infty)$
and, for all $i=0,\ldots,N$, also contains $(v_i,x_i)$.

The metric on $D^{S,f}_N$ is defined by
\begin{multline}\label{eq:metric}
  \rho
  \left(
    (x_1,\ldots,x_N;v_1,\ldots,v_N),
    (x'_1,\ldots,x'_N;v'_1,\ldots,v'_N)
  \right)\\
  :=
  \bigvee_{j=1}^N
  \rho_H
  \left(
    (v_j,x_j),(v'_j,x'_j)
  \right),
\end{multline}
where the metric $\rho_H$ on $[0,1]\times[0,\infty)$ is defined by
\begin{align}
  \rho_H
  \left(
    (v,x),(v',x')
  \right)
  &:=
  H
  \left(
    \{(v,x)\} \cup \{1\}\times[0,\infty),
    \{(v',x')\} \cup \{1\}\times[0,\infty)
  \right)\notag\\
  &=
  \left(
    \left|v-v'\right|
    \vee
    \left|x-x'\right|
  \right)
  \wedge
  (1-v\wedge v'),
  \label{eq:rho-H}
\end{align}
$H$ standing for the Hausdorff metric defined in terms of the $\ell_{\infty}$ metric on $[0,1]\times[0,\infty)$,
as before.

\begin{lemma}
  Each function $U^{S,f}_N$ is continuous on $D^{S,f}_N$
  under our definition \eqref{eq:U}
  and the metric \eqref{eq:metric}.
\end{lemma}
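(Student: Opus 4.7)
The plan is to factor $U^{S,f}_N = F' \circ A^{S,f}_N$ and, since $F'$ has just been reduced to a continuous function on $\Omega'$ in the Hausdorff metric, reduce the lemma to the continuity of the set-valued map
\[
  A^{S,f}_N : D^{S,f}_N \longrightarrow \Omega',
\]
where the domain carries the metric \eqref{eq:metric} and the codomain carries the Hausdorff metric $H$ on $\Omega'$. Continuity of compositions then gives the result.

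To prove continuity of this set-valued map I would fix a base parameter $(x;v)$ and $\epsilon>0$, and look for $\delta>0$ such that $\rho((x;v),(x';v'))<\delta$ forces $H(A^{S,f}_N(x;v), A^{S,f}_N(x';v'))<\epsilon$. The set $A^{S,f}_N(x;v)$ is a finite union of horizontal slabs separated by the vertical lines $t=v_i$ (plus the ray $\{1\}\times[0,\infty)$), and its vertical extent on each slab is obtained by taking $\min$/$\max$ of the neighbouring $x_i$'s and inflating by the fixed buffer $f(S/N)$. The buffer is exactly what makes the slab description continuous across collisions $v_i=v_{i+1}$, because $\min$ and $\max$ over the jointly-coinciding indices are continuous operations. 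In the generic case where every $v_i$ lies strictly below $1$ and is bounded away from its neighbours, the metric $\rho_H$ reduces to the $\ell_\infty$ metric for $\delta$ small enough, and a routine check shows that perturbing each $v_i$ and each $x_i$ by at most $\delta$ shifts the boundaries of each slab by at most $\delta$, hence changes the set by at most $\delta$ in Hausdorff distance.

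The main obstacle is the behaviour at indices with $v_i$ at or very close to $1$. There the cap $(1-v\wedge v')$ in \eqref{eq:rho-H} makes $\rho_H$ small regardless of $|x_i-x_i'|$, so $\delta$ no longer constrains the individual coordinates $x_i'$. The resolution is that any point $(t,x)$ contributed to $A^{S,f}_N(x';v')$ by a slab whose $v$-endpoints are both within $\delta$ of $1$ lies within $\delta$ of the ray $\{1\}\times[0,\infty)$, and the ray is already contained in every element of $\Omega'$; symmetrically for the reverse inclusion. Thus the uncontrollable contributions get absorbed into the common ray in the Hausdorff distance. Partitioning the indices $\{1,\dots,N\}$ into those with $v_i$ bounded away from $1$ (handled by the generic argument above) and those with $v_i$ close to $1$ (handled by ray absorption), and combining the resulting Hausdorff bounds slab by slab, yields the required $H(A^{S,f}_N(x;v), A^{S,f}_N(x';v'))<\epsilon$ and completes the argument.
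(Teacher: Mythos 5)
Your proposal follows the same route as the paper: factor $U^{S,f}_N = F'\circ A^{S,f}_N$, note that $F'$ has already been arranged to be continuous on $\Omega'$, and reduce everything to continuity of the set-valued map $A^{S,f}_N$ into $(\Omega',H)$. The paper simply asserts the latter as ``easy to see,'' so the substance of your proposal is your elaboration of that claim, and the two edge cases you isolate are exactly the ones that need isolating: (i) indices with $v_i$ near $1$, where the cap $1-v\wedge v'$ in \eqref{eq:rho-H} permits $|x_i-x_i'|$ to blow up, correctly resolved by observing that the uncontrolled slab lies within $\delta$ of the common ray $\{1\}\times[0,\infty)$; and (ii) collisions $v_i=v_{i+1}$. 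One small inaccuracy: the buffer $f(S/N)$ is not really what rescues continuity at collisions. What does is the second clause in the definition of $A^{S,f}_N$, which widens the single-point cut at $t=v_{i+1}$ to $\bigl[\bigwedge_{k=i}^{j+1}x_k-f(S/N),\bigvee_{k=i}^{j+1}x_k+f(S/N)\bigr]$ across the coalescing block; when $v_i^n,v_{i+1}^n$ merge, the vanishing slab's cut is subsumed by the union of the adjacent point-cuts, and conversely the widened limiting cut is covered by that same union, so the Hausdorff error is controlled. That works with or without the buffer (which serves mainly to guarantee $\bar\omega\subseteq A^{S,f}_N(\dots)$ in Lemma~\ref{lem:main-lemma}). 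This is a cosmetic misattribution, not a gap, and the overall structure matches the paper's argument.
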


\begin{proof}
  Fix some $(x_1,\ldots,x_N;v_1,\ldots,v_N)\in D^{S,f}_N$.
  Let $(x_1^n,\ldots,x^n_N;v^n_1,\ldots,v^n_N)\in D^{S,f}_N$ for $n=1,2,\ldots$
  and $(x_1^n,\ldots,x^n_N;v^n_1,\ldots,v^n_N)\to(x_1,\ldots,x_N;v_1,\ldots,v_N)$
  in $\rho$ as $n\to\infty$.
  It is easy to see that, in the Hausdorff metric,
  \begin{equation*}
    A^{S,f}_N(x_1^n,\ldots,x_N^n,v_1^n,\ldots,v_N^n)
    \to
    A^{S,f}_N(x_1,\ldots,x_N;v_1,\ldots,v_N)
  \end{equation*}
  as $n\to\infty$.
  This implies
  \[
    U^{S,f}_N(x_1^n,\ldots,x_N^n;v_1^n,\ldots,v_N^n)
    \to
    U^{S,f}_N(x_1,\ldots,x_N;v_1,\ldots,v_N)
  \]
  as $n\to\infty$ and completes the proof.
\end{proof}

Define a functional $F^{S,f}_N:\Omega\to[0,\sup F]$ by
\begin{multline}\label{eq:F}
  F^{S,f}_N(\omega)
  =
  U^{S,f}_N
  \bigl(
    \omega(\phi_{S/N}(\omega)\wedge1),\omega(\phi_{2S/N}(\omega)\wedge1),\ldots,\omega(\phi_{S}(\omega)\wedge1);\\
    \phi_{S/N}(\omega)\wedge1,\phi_{2S/N}(\omega)\wedge1,\ldots,\phi_{S}(\omega)\wedge1
  \bigr),
  \qquad
  \omega\in\Omega;
\end{multline}
when the argument on the right-hand side is outside the domain $D^{S,f}_N$ of $U^{S,f}_N$,
set $F^{S,f}_N(\omega):=\sup F$.

The following lemma lists the main properties of the sequence of functionals $F^{S,f}_N$, $N=1,2,\ldots$,
that we will need.
\begin{lemma}\label{lem:main-lemma}
  For all $\omega\in\Omega\setminus(A_1\cup A_2)$,
  \begin{equation*}
    \forall N:
    F^{S,f}_N(\omega) \le F(\omega)
  \end{equation*}
  and
  \begin{equation}\label{eq:limit}
    \liminf_{N\to\infty}
    F^{S,f}_N(\omega)
    \ge
    F(\omega).
  \end{equation}
\end{lemma}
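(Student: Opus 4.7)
The plan is to verify that, for $\omega \notin A_1 \cup A_2$, the sampled values $(x_1,\ldots,x_N;v_1,\ldots,v_N)$ with $x_i := \omega(\phi_{iS/N}(\omega) \wedge 1)$ and $v_i := \phi_{iS/N}(\omega) \wedge 1$ lie in the domain $D^{S,f}_N$, that the associated set $A := A^{S,f}_N(x_1,\ldots,x_N;v_1,\ldots,v_N)$ contains $\bar\omega$, and that $A \to \bar\omega$ in the Hausdorff metric as $N \to \infty$. Both halves of the lemma then fall out from the monotonicity and continuity of $F'$, together with the identity $F(\omega) = F'(\bar\omega)$ established earlier. Since $\omega \notin A_2$, the extension of $\tilde\omega$ from $[0, D(\omega)]$ to $[0,S]$ by the constant $\omega(1)$ belongs to $\mathfrak{K}$ and hence has $f$ as a modulus of continuity; combined with $D(\omega) \le S$, this identifies $x_i$ with $\tilde\omega(iS/N)$ (with $v_i = 1$ and $x_i = \omega(1)$ whenever $iS/N > D(\omega)$). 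The conditions~\eqref{eq:D} then follow: $x_0 = 1$, $x_i \le 1+f(S) = B$, the $v_i$ are increasing in $[0,1]$, and $|x_j - x_i| \le f((j-i)S/N)$ is the modulus-of-continuity estimate.

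For the upper bound $F^{S,f}_N(\omega) \le F(\omega)$, I would prove $\bar\omega \subseteq A$. The ray $\{1\}\times[0,\infty)$ lies in $A$ by the fourth bullet of its definition. For $t \in (v_i, v_{i+1})$ with $v_{i+1} < 1$, there is $s \in (iS/N, (i+1)S/N)$ with $\phi_s(\omega) = t$, so $\omega(t) = \tilde\omega(s)$ and the modulus estimate yields $|\omega(t) - x_i| \vee |\omega(t) - x_{i+1}| \le f(S/N)$; this places $(t, \omega(t))$ inside the slab $\{t\} \times [x_i \wedge x_{i+1} - f(S/N), x_i \vee x_{i+1} + f(S/N)]$ of the first bullet. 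On the terminal stretch $(v_{i^*}, 1)$, where $i^*$ is the last index with $i^* S/N \le D(\omega)$, the same estimate applied to the constant extension still traps $\omega(t)$ in $[x_{i^*} \wedge x_{i^*+1} - f(S/N), x_{i^*} \vee x_{i^*+1} + f(S/N)]$, and any interior collisions $v_i = v_{i+1}$ are absorbed by the wider second-bullet slab. Monotonicity of $F'$ then gives $F^{S,f}_N(\omega) = F'(A) \le F'(\bar\omega) = F(\omega)$.

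For~\eqref{eq:limit}, I would show $A^{S,f}_N \to \bar\omega$ in the Hausdorff metric as $N \to \infty$. The containment $\bar\omega \subseteq A^{S,f}_N$ just proved covers one direction. For the other, any $(t, a) \in A^{S,f}_N$ with $t < 1$ sits in a slab whose vertical extent is at most $|x_i - x_{i+1}| + 2f(S/N) \le 3f(S/N)$ (with a comparable bound at any collision point by the modulus estimate across the colliding indices), and $\omega(t)$ lies in that same slab, so the $\ell_\infty$ distance from $(t, a)$ to $(t, \omega(t)) \in \bar\omega$ is a fixed multiple of $f(S/N)$; points with $t = 1$ are in $\bar\omega$ directly. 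Since $f(S/N) \to 0$, Hausdorff convergence follows, and the continuity of $F'$ gives $F^{S,f}_N(\omega) = F'(A^{S,f}_N) \to F'(\bar\omega) = F(\omega)$, implying~\eqref{eq:limit} (in fact with equality).

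The main obstacle is the bookkeeping across the boundary between the \emph{active} sample indices (those with $iS/N \le D(\omega)$, yielding $v_i < 1$) and the \emph{degenerate} ones (those with $iS/N > D(\omega)$, yielding $v_i = 1$ and $x_i = \omega(1)$): the constant extension of $\tilde\omega$ past $D(\omega)$ built into the definition of the extended $\mathfrak{K}$ is precisely what keeps the modulus-of-continuity estimate uniform across this transition, and the explicit rule $A^1 = [0,\infty)$ together with the second-bullet widening handle any collisions of the $v_i$'s without inflating the slab widths beyond a vanishing multiple of $f(S/N)$.
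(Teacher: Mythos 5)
Your proof is correct and follows essentially the same route as the paper's: establish $\bar\omega\subseteq A^{S,f}_N(\ldots)$ and use the monotonicity of $F'$ for the first inequality, then show Hausdorff convergence of $A^{S,f}_N(\ldots)$ to $\bar\omega$ and use continuity of $F'$ for the second. You supply several verifications (domain membership, the slab-width estimate) that the paper treats as immediate, but the structure and the key facts invoked are the same.
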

\begin{proof}
  Notice that $\omega\notin A_1$ implies $\phi_S(\omega)=1$;
  therefore, $\omega\in\Omega\setminus(A_1\cup A_2)$ implies
  \begin{multline*}
    \bar\omega
    \subseteq
    A^{S,f}_N
    \bigl(
      \omega(\phi_{S/N}(\omega)\wedge1),\omega(\phi_{2S/N}(\omega)\wedge1),\ldots,\omega(\phi_{S}(\omega)\wedge1);\\
      \phi_{S/N}(\omega)\wedge1,\phi_{2S/N}(\omega)\wedge1,\ldots,\phi_{S}(\omega)\wedge1
    \bigr),
  \end{multline*}
  which immediately implies $F^{S,f}_N(\omega)\le F(\omega)$.
  Since for $\omega\in\Omega\setminus(A_1\cup A_2)$ the Hausdorff distance between
  \begin{multline*}
    A^{S,f}_N
    \bigl(
      \omega(\phi_{S/N}(\omega)\wedge1),\omega(\phi_{2S/N}(\omega)\wedge1),\ldots,\omega(\phi_{S}(\omega)\wedge1);\\
      \phi_{S/N}(\omega)\wedge1,\phi_{2S/N}(\omega)\wedge1,\ldots,\phi_{S}(\omega)\wedge1
    \bigr)
  \end{multline*}
  and $\bar\omega$ tends to 0 as $N\to\infty$,
  we also have \eqref{eq:limit}.
\end{proof}

Let us extend $U^{S,f}_N$ to the whole of
\begin{equation*}
  \left\{
    (x_1,\ldots,x_N;v_1,\ldots,v_N)\in[0,B]^{N}\times[0,1]^{N}
    \st
    v_1\le\cdots\le v_N
  \right\}
\end{equation*}
obtaining a continuous function $\tilde U_N$ taking values in $[0,\sup F]$;
this is possible by the Tietze--Urysohn theorem
(see, e.g., \cite{Engelking:1989}, 2.1.8).
Since the domain of the function $\tilde U_N$ is compact
(in the usual topology, let alone in the topology generated by $\rho$),
this function is uniformly continuous.
Finally, extend $\tilde U_N$ to the whole of
\begin{equation*}
  D_N
  :=
  \left\{
    (x_1,\ldots,x_N;v_1,\ldots,v_N)\in[0,\infty)^{N}\times[0,1]^{N}
    \st
    v_1\le\cdots\le v_N
  \right\}
\end{equation*}
by
\[
  U_N(x_1,\ldots,x_N;v_1,\ldots,v_N)
  :=
  \tilde U_N(x_1\wedge B,\ldots,x_N\wedge B;v_1,\ldots,v_N).
\]
The function $U_N$ inherits the uniform continuity of $\tilde U_N$.

Analogously to \eqref{eq:F},
define a functional $F_N$ by
\begin{multline}\label{eq:F_N}
  F_N(\omega)
  =
  U_N
  \bigl(
    \omega(\phi_{S/N}(\omega)\wedge1),\omega(\phi_{2S/N}(\omega)\wedge1),\ldots,\omega(\phi_{S}(\omega)\wedge1);\\
    \phi_{S/N}(\omega)\wedge1,\phi_{2S/N}(\omega)\wedge1,\ldots,\phi_{S}(\omega)\wedge1
  \bigr);
\end{multline}
by the definition of $U_N$, $F_N(\omega)=F^{S,f}_N(\omega)$ when $\omega\in\Omega\setminus(A_1\cup A_2)$.

Our task is now reduced to proving $\UEG(F_N)\le\UEM(F_N)$.
To demonstrate this,
we first notice that
\begin{align}
  \UPG(A_1) &\le \epsilon, & \UPG(A_2) &\le \epsilon,
    \label{eq:c}\\
  \UPM(A_1) &\le\UPG(A_1) \le \epsilon, & \UPM(A_2) &\le\UPG(A_2) \le \epsilon;
    \label{eq:d}
\end{align}
indeed, \eqref{eq:c} follows from Theorem~3.1 of \cite{\CTIV}
and the time-superinvariance of the sets
\[
  \left\{
    \omega\in C_1[0,\infty)
    \st
    \text{$\tilde\omega$ is defined and positive over $[0,S]$}
  \right\}
\]
and
\[
  \left\{
    \omega\in C_1[0,\infty)
    \st
    \text{$\tilde\omega$ is defined over $[0,S]$ and $\tilde\omega|_{[0,S]}\notin\mathfrak{K}$}
  \right\},
\]
and \eqref{eq:d} follows from $\UPM\le\UPG$, established in the previous subsection:
see \eqref{eq:m-le-g}.
In combination with Lemmas~\ref{lem:outer}, \ref{lem:Fatou}, \ref{lem:main-lemma},
and the assumption $\UEG(F_N)\le\UEM(F_N)$, for all $N$,
this implies
\begin{align*}
  \UEG(F)
  &\le
  \UEG
  \left(
    \liminf_{N\to\infty} F^{S,f}_N
  \right)
  +
  2C\epsilon
  \le
  \liminf_{N\to\infty}
  \UEG(F^{S,f}_N)
  +
  2C\epsilon\\
  &\le
  \liminf_{N\to\infty}
  \UEG(F_N)
  +
  4C\epsilon
  \le
  \liminf_{N\to\infty}
  \UEM(F_N)
  +
  4C\epsilon\\
  &\le
  \liminf_{N\to\infty}
  \UEM(F^{S,f}_N)
  +
  6C\epsilon
  \le
  \UEM(F)
  +
  8C\epsilon
\end{align*}
for $C:=\sup F$.
Since $\epsilon$ can be arbitrarily small, this achieves our goal.

\subsubsection{Setting intermediate goals}

Let us fix $S$ and $N$;
our goal is to prove $\UEG(F_N)\le\UEM(F_N)$.
We will abbreviate $U_N$ to $U$.

We start the proof by defining functions
\begin{align*}
  U^{\rm e}_i:
  D^{\rm e}_i \to [0,\infty),
  &\quad
  i=0,\ldots,N,\\
  U^{\rm m}_i:
  D^{\rm m}_i \to [0,\infty),
  &\quad
  i=0,\ldots,N-1
\end{align*}
(with ``m'' standing for ``maximization'' and ``e'' for ``expectation'')
whose domains are
\begin{align*}
  D^{\rm e}_i
  &:=
  \Bigl\{
    (x_1,v_1,\ldots,x_i,v_i) \in ([0,\infty)\times[0,1])^i
    \st{}\\
  &\qquad
    v_1\le\cdots\le v_i \text{ and }
    (x_j=x_{j+1} \text{ whenever $j<i$ and $v_j=1$})
  \Bigr\},\\
  D^{\rm m}_i
  &:=
  \bigl\{
    (x_1,v_1,\ldots,x_i,v_i,x_{i+1}) \in ([0,\infty)\times[0,1])^i\times[0,\infty)
    \st{}\\
  &\qquad
    v_1\le\cdots\le v_i \text{ and }
    (x_j=x_{j+1} \text{ whenever $j\le i$ and $v_j=1$})
  \bigr\}.
\end{align*}
They will be defined by induction in $i$.

The basis of induction is
\begin{equation}\label{eq:basis-U}
  U^{\rm e}_N(x_1,v_1,\ldots,x_N,v_N)
  :=
  U(x_1,\ldots,x_{N};v_1,\ldots,v_{N}).
\end{equation}
Given $U^{\rm e}_{i+1}$, where $i:=N-1$,
we define
\begin{equation}\label{eq:m-U}
  U^{\rm m}_i(x_1,v_1,\ldots,x_i,v_i,x_{i+1})
  :=
  \sup_{v\in[v_i,1]} U^{\rm e}_{i+1}(x_1,v_1,\ldots,x_i,v_i,x_{i+1},v).
\end{equation}
Given $U^{\rm m}_i$, where $i:=N-1$,
we next define
\begin{equation}\label{eq:e-U}
  U^{\rm e}_i(x_1,v_1,\ldots,x_i,v_i)
  =
  \begin{cases}
    U^{\rm m}_i(x_1,v_1,\ldots,x_i,v_i,x_i) & \text{if $v_i=1$}\\
    \Expect U^{\rm m}_i(x_1,v_1,\ldots,x_i,v_i,\xi) & \text{otherwise}
  \end{cases}
\end{equation}
where $\xi\ge0$ is the value at time $S/N$ of a linear Brownian motion that starts at $x_i$ at time $0$
and is stopped when it hits level $0$.
Next use alternately \eqref{eq:m-U} and \eqref{eq:e-U} for
\[
  i=N-2,N-2;N-3,N-3;\ldots;1,1
\]
to define inductively other $U^{\rm m}_i$ and $U^{\rm e}_i$.
Finally, define
\[
  U^{\rm m}_0(x_1)
  :=
  \sup_{v\in[0,1]} U^{\rm e}_1(x_1,v),
  \qquad
  U^{\rm e}_0
  :=
  \Expect U^{\rm m}_0(\xi)
\]
where $\xi\ge0$ is the value at time $S/N$ of a linear Brownian motion that starts at $1$ at time $0$
and is stopped when it hits level $0$
(the last event being unlikely for a large $N$).

In this proof we will show that $U^{\rm e}_0$ is sandwiched
between $\UEM(F_N)$ and $\UEG(F_N)$
as $\UEM(F_N)\ge U^{\rm e}_0\ge\UEG(F_N)$,
which will achieve our goal.
But first we discuss some properties of regularity
of the intermediate functions $U^{\rm m}_i$ and $U^{\rm e}_i$.

It is obvious that each of the functions $U^{\rm e}_i$ and $U^{\rm m}_i$ is bounded
(by $\sup F$),
and the following two lemmas imply that they are uniformly continuous.
The metric on $D^{\rm e}_i$ is defined by
\begin{equation*}
  \rho^{\rm e}
  \left(
    (x_1,v_1,\ldots,x_i,v_i),
    (x'_1,v'_1,\ldots,x'_i,v'_i)
  \right)
  :=
  \bigvee_{j=1}^i
  \rho_H
  \left(
    (v_{j},x_j),
    (v'_{j},x'_j)
  \right),
\end{equation*}
$\rho_H$ being defined in \eqref{eq:rho-H}.
The metric on $D^{\rm m}_i$ is defined by
\begin{multline*}
  \rho^{\rm m}
  \left(
    (x_1,v_1,\ldots,x_i,v_i,x_{i+1}),
    (x'_1,v'_1,\ldots,x'_i,v'_i,x'_{i+1})
  \right)\\
  :=
  \sup_{v\in[v_i\wedge v'_i,1]}
  \rho^{\rm e}
  \Bigl(
    (x_1,v_1,\ldots,x_i,v_i,x_{i+1},v\vee v_i),\\[-3mm]
    (x'_1,v'_1,\ldots,x'_i,v'_i,x'_{i+1},v\vee v'_i)
  \Bigr).
\end{multline*}

\begin{lemma}
  If a function $U^{\rm e}_{i+1}$ on $D^{\rm e}_{i+1}$ is uniformly continuous,
  then the function $U^{\rm m}_i$ on $D^{\rm m}_i$ defined by \eqref{eq:m-U}
  is also uniformly continuous (with the same modulus of continuity).
\end{lemma}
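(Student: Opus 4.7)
The plan is to exploit the standard inequality $|\sup f - \sup g| \le \sup|f-g|$ after reparametrizing the two suprema (over $[v_i,1]$ and $[v'_i,1]$) on a common domain, so that the integrands become directly comparable via the metric $\rho^{\rm e}$.

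Fix $P=(x_1,v_1,\ldots,x_i,v_i,x_{i+1})$ and $P'=(x'_1,v'_1,\ldots,x'_i,v'_i,x'_{i+1})$ in $D^{\rm m}_i$, and let $\omega(\cdot)$ be a modulus of uniform continuity for $U^{\rm e}_{i+1}$. The key observation is a reparametrization: as $v$ ranges over $[v_i\wedge v'_i,1]$, the quantity $v\vee v_i$ sweeps out $[v_i,1]$, and likewise $v\vee v'_i$ sweeps out $[v'_i,1]$ (an immediate check by splitting into the cases $v_i\le v'_i$ and $v_i>v'_i$). Hence
\[
  U^{\rm m}_i(P)
  =
  \sup_{v\in[v_i\wedge v'_i,1]}
  U^{\rm e}_{i+1}(x_1,v_1,\ldots,x_i,v_i,x_{i+1},v\vee v_i),
\]
and analogously for $U^{\rm m}_i(P')$; both augmented tuples lie in $D^{\rm e}_{i+1}$, since the constraint $x_j=x_{j+1}$ at $v_j=1$ for $j\le i$ is inherited from the domain $D^{\rm m}_i$, and the new last coordinate $v\vee v_i$ sits in $[v_i,1]$ as required.

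By the very definition of $\rho^{\rm m}$, for each $v\in[v_i\wedge v'_i,1]$ the $\rho^{\rm e}$-distance between the two augmented tuples is bounded by $\rho^{\rm m}(P,P')$, so uniform continuity of $U^{\rm e}_{i+1}$ yields the pointwise bound
\[
  \bigl|U^{\rm e}_{i+1}(\ldots,x_{i+1},v\vee v_i)-U^{\rm e}_{i+1}(\ldots,x'_{i+1},v\vee v'_i)\bigr|
  \le
  \omega(\rho^{\rm m}(P,P')).
\]
Passing to the supremum on both sides gives $|U^{\rm m}_i(P)-U^{\rm m}_i(P')|\le\omega(\rho^{\rm m}(P,P'))$, which is exactly uniform continuity of $U^{\rm m}_i$ with the same modulus. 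I expect no substantive obstacle: the metric $\rho^{\rm m}$ has been engineered precisely so that this ``sup of a uniformly continuous family'' argument goes through, and the only point requiring any care is the reparametrization that replaces the differing lower endpoints $v_i$ and $v'_i$ by the common endpoint $v_i\wedge v'_i$.
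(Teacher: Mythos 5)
Your proof is correct and uses essentially the same idea as the paper: the reparametrization $v\mapsto v\vee v_i$, $v\mapsto v\vee v'_i$ over the common domain $[v_i\wedge v'_i,1]$, combined with the fact that $\rho^{\rm m}$ is defined precisely as the supremum of the corresponding $\rho^{\rm e}$-distances. The only cosmetic difference is that you apply $\lvert\sup f-\sup g\rvert\le\sup\lvert f-g\rvert$ symmetrically, while the paper picks the maximizer $v'$ on one side and proves the one-sided bound (the other direction following by symmetry).
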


\begin{proof}
  Let $f$ be a modulus of continuity for $U^{\rm e}_{i+1}$
  (in this paper we only consider increasing moduli of continuity).
  It suffices to prove that, for each $\delta>0$,
  \begin{multline}\label{eq:to-prove-1}
    \sup_{v\in[v_i,1]} U^{\rm e}_{i+1}(x_1,v_1,\ldots,x_i,v_i,x_{i+1},v)\\
    \ge
    \sup_{v\in[v'_i,1]} U^{\rm e}_{i+1}(x'_1,v'_1,\ldots,x'_i,v'_i,x'_{i+1},v)
    -
    f(\delta)
  \end{multline}
  provided the $D^{\rm m}_i$ distance between $(x_1,v_1,\ldots,x_{i+1})$ and $(x'_1,v'_1,\ldots,x'_{i+1})$
  does not exceed $\delta$.
  This follows from
  \begin{multline*}
    \sup_{v\in[v_i,1]} U^{\rm e}_{i+1}(x_1,v_1,\ldots,x_i,v_i,x_{i+1},v)
    \ge
    U^{\rm e}_{i+1}(x_1,v_1,\ldots,x_i,v_i,x_{i+1},v'\vee v_i)\\
    \ge
    U^{\rm e}_{i+1}(x'_1,v'_1,\ldots,x'_i,v'_i,x'_{i+1},v')
    -
    f(\delta)\\
    =
    \sup_{v\in[v'_i,1]} U^{\rm e}_{i+1}(x'_1,v'_1,\ldots,x'_i,v'_i,x'_{i+1},v)
    -
    f(\delta),
  \end{multline*}
  where $v'\ge v'_i$ is the point at which the supremum on the right-hand side of \eqref{eq:to-prove-1} is attained.
\end{proof}

\begin{lemma}\label{lem:uniform-continuity-e}
  If a function $U^{\rm m}_i$ on $D^{\rm m}_i$ is bounded and uniformly continuous,
  then the function $U^{\rm e}_i$ on $D^{\rm e}_i$ defined by \eqref{eq:e-U}
  is also uniformly continuous.
\end{lemma}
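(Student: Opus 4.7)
The plan is to deduce uniform continuity of $U^{\rm e}_i$ from that of $U^{\rm m}_i$ by bounding the $\rho^{\rm m}$-distance of the tuples appearing as arguments of $U^{\rm m}_i$ in \eqref{eq:e-U} and, when $v_i<1$, by coupling the absorbed Brownian motions that produce $\xi$ and $\xi'$. Let $f$ be a modulus of continuity for $U^{\rm m}_i$ and $C:=\sup U^{\rm m}_i<\infty$. Given $p,p'\in D^{\rm e}_i$ with $\rho^{\rm e}(p,p')\le\delta$, I aim to show
\[
  \lvert U^{\rm e}_i(p)-U^{\rm e}_i(p')\rvert
  \le
  f(\delta)+2C\delta\sqrt{2N/(\pi S)},
\]
which is a modulus of continuity in $\delta$, splitting the argument according to whether $v_i\wedge v'_i\ge 1-\delta$.

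In the easy case, $v_i\wedge v'_i\ge 1-\delta$, the one-sided truncation $\wedge(1-v\wedge v')$ in \eqref{eq:rho-H} is the driving force: for any $y,y'\in[0,\infty)$ (compatible with the domain constraints of $D^{\rm m}_i$) and any $v\in[v_i\wedge v'_i,1]$, the $\rho_H$ contribution at index $i+1$ is bounded by $1-(v\vee v_i)\wedge(v\vee v'_i)\le 1-v_i\wedge v'_i\le\delta$. Combined with $\rho_H((v_j,x_j),(v'_j,x'_j))\le\delta$ for $j\le i$, this gives $\rho^{\rm m}((p,y),(p',y'))\le\delta$, so $\lvert U^{\rm m}_i(p,y)-U^{\rm m}_i(p',y')\rvert\le f(\delta)$ pointwise; integrating against the relevant laws (a Dirac mass at $x_i$ if $v_i=1$, the absorbed-Brownian-motion law if $v_i<1$, independently on the two sides) yields $\lvert U^{\rm e}_i(p)-U^{\rm e}_i(p')\rvert\le f(\delta)$.

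In the other case, $v_i\wedge v'_i<1-\delta$, the truncation is inactive at index $i$, so $\lvert v_i-v'_i\rvert\vee\lvert x_i-x'_i\rvert\le\delta$ and both $v_i,v'_i<1$, meaning both sides of \eqref{eq:e-U} are expectations. I would construct $\xi$ and $\xi'$ on a common probability space via shift coupling: for a standard Brownian motion $B$, set $\tau:=\inf\{s:x_i+B_s=0\}$, $\tau':=\inf\{s:x'_i+B_s=0\}$, and $\xi:=x_i+B_{(S/N)\wedge\tau}$, $\xi':=x'_i+B_{(S/N)\wedge\tau'}$. Assuming $x_i\le x'_i$ without loss of generality, so $\tau\le\tau'$ almost surely, on the good event $G:=\{\tau>S/N\}\cup\{\tau'\le S/N\}$ either both processes survive ($\lvert\xi-\xi'\rvert=x'_i-x_i\le\delta$) or both are absorbed ($\xi=\xi'=0$); together with $\lvert v_i-v'_i\rvert\le\delta$ this forces $\rho^{\rm m}\le\delta$ between the augmented tuples, whence the integrand difference is $\le f(\delta)$. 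On $G^c=\{\tau\le S/N<\tau'\}$ the integrand difference is only bounded by $2C$, but the reflection principle gives
\[
  P(G^c)=2\bigl[\Phi(-x_i/\sqrt{S/N})-\Phi(-x'_i/\sqrt{S/N})\bigr]\le\delta\sqrt{2N/(\pi S)}
\]
uniformly in $x_i\ge 0$ (using $\phi\le 1/\sqrt{2\pi}$ and the mean value theorem), which finishes the bound.

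The main obstacle is this second case: the laws of $\xi$ and $\xi'$ genuinely differ (absorbed Brownian motions started from $x_i$ versus $x'_i$), so passing continuity through the expectation in \eqref{eq:e-U} cannot be done pointwise and instead requires an effective coupling whose failure probability is uniformly $O(\delta)$ in $x_i$; the first case is a largely formal consequence of the truncation clause in \eqref{eq:rho-H}.
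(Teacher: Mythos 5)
Your proof is correct and rests on the same core mechanism as the paper's (coupling the two absorbed Brownian motions and using the reflection principle to bound the probability that one is absorbed while the other is not), but the organization is genuinely different and, in a couple of respects, cleaner. The paper proceeds by a three-way case split on whether each of $v_i,v'_i$ equals $1$; in the main case it inserts a middle term and applies the triangle inequality
\[
  \lvert\Expect U^{\rm m}_i(E,\xi)-\Expect U^{\rm m}_i(E',\xi')\rvert
  \le
  \lvert\Expect U^{\rm m}_i(E,\xi)-\Expect U^{\rm m}_i(E,\xi')\rvert
  +
  \lvert\Expect U^{\rm m}_i(E,\xi')-\Expect U^{\rm m}_i(E',\xi')\rvert,
\]
bounding the first summand by the auxiliary Lemma~\ref{lem:auxiliary} (which performs the coupling) and the second by uniform continuity, arriving at the modulus $2f(\delta)+C\delta\sqrt{N/S}$. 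You instead couple $(\xi,\xi')$ jointly in a single step, so the good-event contribution is bounded by $f(\delta)$ directly, without the middle-term detour; and your two-way split on $v_i\wedge v'_i\ge 1-\delta$ replaces the paper's three-way split by exploiting the truncation $\wedge(1-v\wedge v')$ in \eqref{eq:rho-H} head-on. This also cleanly dispatches the boundary regime where both $v_i,v'_i<1$ but $v_i\wedge v'_i\ge 1-\delta$: there $\lvert x_i-x'_i\rvert$ can exceed $\delta$ (only the truncated quantity is $\le\delta$), so the hypothesis $\lvert x'-x\rvert\le\delta$ of Lemma~\ref{lem:auxiliary} is not guaranteed, whereas your case-1 argument bounds $\rho^{\rm m}$ by $\delta$ for \emph{all} $y,y'$ and avoids this issue. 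One very minor slackness: since $U^{\rm m}_i$ takes values in $[0,C]$, the integrand difference on $G^c$ is bounded by $C$, not $2C$, so your bad-event term could be halved; this of course does not affect the conclusion.
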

\begin{proof}
  Let $\delta>0$ and $f$ be the optimal modulus of continuity for $U^{\rm m}_i$;
  to bound the optimal modulus of continuity for $U^{\rm e}_i$,
  we consider three possibilities for two points $E$ and $E'$ in $D^{\rm e}_i$
  where
  $E=(x_1,v_1,\ldots,x_i,v_i)$,
  $E'=(x'_1,v'_1,\ldots,x'_i,v'_i)$,
  and $\rho^{\rm e}(E,E')\le\delta$.
  \begin{itemize}
  \item
    If $v_i=v'_i=1$,
    the difference between $U^{\rm e}_i(E)$ and $U^{\rm e}_i(E')$
    does not exceed $f(\delta)$.
  \item
    If $v_i<v'_i=1$ or $v'_i<v_i=1$,
    the difference between $U^{\rm e}_i(E)$ and $U^{\rm e}_i(E')$
    also does not exceed $f(\delta)$.
    Indeed, suppose, for concreteness, that $v'_i=1$.
    Then $v_i\ge1-\delta$.
    By definition, $U^{\rm e}_i(E)$ is an average of $U^{\rm m}_i(E,x)$ over $x$,
    and $U^{\rm e}_i(E')$ coincides with $U^{\rm m}_i(E',x'_i)$.
    By the definition of the metric on $D^{\rm m}_i$,
    the $\rho^{\rm m}$ distance between $(E,x)$ and $(E',x'_i)$ is at most $\delta$,
    and so the difference between $U^{\rm e}_i(E)$ and $U^{\rm e}_i(E')$
    does not exceed $f(\delta)$.
  \item
    If $v_i<1$ and $v'_i<1$,
    the difference between $U^{\rm e}_i(E)$ and $U^{\rm e}_i(E')$
    does not exceed $2f(\delta)+C\delta\sqrt{N/S}$,
    where $C$ is an upper bound on $U^{\rm m}_i$.
    Let us check this.
    Our goal is to prove that
    \[
      \left|
        \Expect U^{\rm m}_i(E,\xi)
        -
        \Expect U^{\rm m}_i(E',\xi')
      \right|
      \le
      2f(\delta)+C\delta\sqrt{N/S}
    \]
    where $\xi$ (resp.\ $\xi'$) is the value at time $S/N$ of a linear Brownian motion
    that starts at $x_i$ (resp.\ $x'_i$) at time $0$
    and is stopped when it hits level $0$.
    It suffices to notice that
    \begin{align}
      &\left|
        \Expect U^{\rm m}_i(E,\xi)
        -
        \Expect U^{\rm m}_i(E',\xi')
      \right|\notag\\
      &\le
      \left|
        \Expect U^{\rm m}_i(E,\xi)
        -
        \Expect U^{\rm m}_i(E,\xi')
      \right|
      +
      \left|
        \Expect U^{\rm m}_i(E,\xi')
        -
        \Expect U^{\rm m}_i(E',\xi')
      \right|\label{eq:addends}\\
      &\le
      f(\delta)+C\delta/\sqrt{S/N}
      +
      f(\delta).\notag
    \end{align}
    The upper bound $f(\delta)+C\delta/\sqrt{S/N}$
    on the first addend in \eqref{eq:addends}
    follows from Lemma~\ref{lem:auxiliary} below;
    we also used the uniform continuity of $U^{\rm m}_i(E,\cdot)$ and $U^{\rm m}_i(\cdot,x)$,
    where $x\in[0,\infty)$, with $f$ as modulus of continuity.
  \end{itemize}
  In all three cases the difference is bounded by $2f(\delta)+C\delta\sqrt{N/S}$.
\end{proof}

The following result was used in the proof of Lemma~\ref{lem:uniform-continuity-e} above.
\begin{lemma}\label{lem:auxiliary}
  Suppose $a>0$ and $u:[0,\infty)\to[0,C]$ is a bounded uniformly continuous function
  with $f$ as modulus of continuity.
  Then
  \[
    x\in[0,\infty)
    \mapsto
    \Expect u(W^x_{\tau\wedge a}),
  \]
  where $W^x$ is a Brownian motion started at $x$
  and $\tau$ is the moment it hits level $0$,
  is uniformly continuous with
  $\delta>0\mapsto f(\delta)+C\delta/\sqrt{a}$
  as modulus of continuity.
\end{lemma}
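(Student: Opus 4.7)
The plan is to couple the two processes through a common standard Brownian motion. Take a standard Brownian motion $W$ started at $0$ on some probability space, set $W^x := x+W$ for each $x\ge 0$, and write $\tau^x := \inf\{t\ge 0 \st x+W_t = 0\}$. Without loss of generality assume $0\le x\le y$ with $y-x\le\delta$; then pathwise $W^x\le W^y$, and consequently $\tau^x\le\tau^y$.

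I would then partition the sample space into three events on which the difference $u(W^x_{\tau^x\wedge a}) - u(W^y_{\tau^y\wedge a})$ is easy to control. On $\{\tau^y<a\}$ we also have $\tau^x<a$, so both stopped values equal $0$ and the difference vanishes. On $\{\tau^x\ge a\}$ we also have $\tau^y\ge a$, so the stopped values are $x+W_a$ and $y+W_a$, and the pointwise difference is bounded by $f(|y-x|)\le f(\delta)$. On the intermediate event $\{\tau^x<a\le\tau^y\}$ the stopped values are $0$ and $y+W_a$ respectively, contributing at most $C$ to the pointwise difference (using the uniform bound on $u$).

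The quantitative heart of the argument is the estimate $\Prob(\tau^x<a\le\tau^y)\le\delta/\sqrt a$. Since this event coincides with $\{-y<\min_{0\le t\le a}W_t\le -x\}$, and since by the reflection principle $\min_{0\le t\le a}W_t$ has the same distribution as $-|W_a|$, this probability equals $\Prob(x\le|W_a|<y)$, which is bounded by $(y-x)\sqrt{2/(\pi a)}\le\delta/\sqrt a$, using that the density of $|W_a|$ on $[0,\infty)$ is at most $\sqrt{2/(\pi a)}<1/\sqrt a$. Combining the three cases yields
\[
  \left|
    \Expect u(W^x_{\tau^x\wedge a}) - \Expect u(W^y_{\tau^y\wedge a})
  \right|
  \le
  f(\delta) + C\delta/\sqrt{a},
\]
which is exactly the claimed modulus of continuity. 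The only real obstacle is choosing the right coupling and case split; once those are fixed, the reflection principle supplies the sharp bound on the intermediate event and the lemma follows immediately.
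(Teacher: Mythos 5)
Your proposal is correct and follows essentially the same route as the paper: the same coupling of $W^x$ and $W^{x'}$ through a common driving Brownian motion, the same three-way split of sample paths (both hit zero before $a$; neither hits zero before $a$; the lower hits but the upper does not), and the same reflection-principle estimate of the intermediate event — the paper writes it as $2\Phi(-x/\sqrt a)-2\Phi(-x'/\sqrt a)\le\sqrt{2/\pi}\,\delta/\sqrt a<\delta/\sqrt a$, which is precisely your bound $\Prob\bigl(x\le|W_a|<y\bigr)\le\delta\sqrt{2/(\pi a)}$.
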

\begin{proof}
  Consider points $x\in[0,\infty)$ and $x'\in(x,x+\delta]$, for some $\delta>0$.
  Let us map each path of $W^{x'}_{\tau\wedge a}$ to the path of $W^x_{\tau\wedge a}$
  obtained by subtracting $x'-x$ and stopping when level 0 is hit;
  we will refer to the latter as the path \emph{corresponding} to the former.
  There are three kinds of paths of $W^{x'}_{\tau\wedge a}$:
  \begin{itemize}
  \item
    Those that never hit level $x'-x$ over the time interval $[0,a]$.
    The average of $u(W^{x'}_{\tau\wedge a})=u(W^{x'}_{a})$
    over such paths
    and the average of $u(W^{x}_{\tau\wedge a})=u(W^{x}_{a})$
    over the corresponding paths
    differ by at most $f(\delta)$.
  \item
    Those that hit level $0$ over $[0,a]$.
    The average of $u(W^{x'}_{\tau\wedge a})=u(0)$
    over such paths
    and the average of $u(W^{x}_{\tau\wedge a})=u(0)$
    over the corresponding paths
    coincide.
  \item
    Those that hit level $x'-x$ but never hit level $0$ over $[0,a]$.
    The probability of such paths is
    \begin{align*}
      2\Phi(-x/\sqrt{a})-2\Phi(-x'/\sqrt{a})
      &\le
      2\Prob(\xi\in[0,(x'-x)/\sqrt{a}])\\
      &<
      \frac{2}{\sqrt{2\pi}}
      (x'-x)/\sqrt{a}
      <
      \delta/\sqrt{a},
    \end{align*}
    where $\Phi$ is the standard normal distribution function, $\xi\sim\Phi$,
    and the factor of 2 comes from the reflection principle.
  \end{itemize}
  Therefore, the overall averages of $u(W^{x}_{\tau\wedge a})$ and $u(W^{x'}_{\tau\wedge a})$
  differ by at most $f(\delta)+C\delta/\sqrt{a}$.
\end{proof}

\subsubsection{Tackling measure-theoretic probability}

First we prove an easy auxiliary statement ensuring the existence of measurable ``choice functions''.
\begin{lemma}\label{lem:choice}
  Suppose $\{A_{\theta}\st\theta\in\Theta\}$ is a countable cover of a measurable space $\Omega$
  such that each $A_{\theta}$ is measurable.
  There is a measurable function $f:\Omega\to\Theta$
  (with the discrete $\sigma$-algebra on $\Theta$)
  such that $\omega\in A_{f(\omega)}$ for all $\omega\in\Omega$.
\end{lemma}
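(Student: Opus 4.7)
The plan is to exploit the countability of $\Theta$ to enumerate the cover and define $f$ by selecting the first index that works, which is the standard trick for measurable selection in the countable case.

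First I would fix an enumeration $\Theta = \{\theta_1, \theta_2, \ldots\}$ (allowing the list to be finite, with the obvious modification). Then I would define $f : \Omega \to \Theta$ by $f(\omega) := \theta_{n(\omega)}$ where
\[
  n(\omega)
  :=
  \min\{n \in \mathbb{N} \st \omega \in A_{\theta_n}\}.
\]
This minimum exists and is finite for every $\omega$ because $\{A_\theta\}$ covers $\Omega$, so the selection is well-defined and automatically satisfies $\omega \in A_{f(\omega)}$.

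For measurability, since $\Theta$ carries the discrete $\sigma$-algebra and is countable, it suffices to check that the preimage of each singleton $\{\theta_n\}$ is measurable. But
\[
  f^{-1}(\{\theta_n\})
  =
  A_{\theta_n} \setminus \bigcup_{k=1}^{n-1} A_{\theta_k},
\]
which is a finite Boolean combination of the measurable sets $A_{\theta_k}$ and hence measurable. An arbitrary subset $T \subseteq \Theta$ then gives $f^{-1}(T) = \bigcup_{\theta_n \in T} f^{-1}(\{\theta_n\})$, a countable union of measurable sets.

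There is no real obstacle here: the only subtlety is that we need countability of $\Theta$ both to make the ``first index that works'' construction meaningful and to pass from measurability of singleton preimages to measurability for the discrete $\sigma$-algebra. Without countability one would need a selection theorem of Kuratowski--Ryll-Nardzewski type, but the hypothesis as stated reduces the problem to the elementary argument above.
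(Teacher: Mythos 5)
Your proof is correct and uses essentially the same idea as the paper: enumerate $\Theta$, pick the least index whose set contains $\omega$, and verify measurability. The only cosmetic difference is that the paper checks measurability of $\{f\le\theta\}=A_1\cup\cdots\cup A_\theta$ while you check measurability of the singleton preimages $f^{-1}(\{\theta_n\})$; these are trivially equivalent.
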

\begin{proof}
  Assume, without loss of generality, $\Theta=\mathbb{N}$.
  Define
  \[
    f(\omega):=\min\{\theta\st\omega\in A_{\theta}\}.
  \]
  Then, for each $\theta\in\mathbb{N}$, the set
  \[
    \{\omega\st f(\omega)\le\theta\}
    =
    A_1 \cup \cdots \cup A_{\theta}
  \]
  is measurable.
\end{proof}

In this section we show that $\UEM(F_N)\ge U_0^{\rm e}$.
We define a martingale measure $P$ by backward induction.
For each $i=0,\ldots,N-1$, let $V_{i+1}$ be a Borel function on $D^{\rm m}_i$ such that,
for all $(x_1,v_1,\ldots,x_i,v_i,x_{i+1})\in D^{\rm m}_i$ satisfying $v_i<1$,
it is true that
\[
  v_i < V_{i+1}(x_1,v_1,\ldots,x_i,v_i,x_{i+1}) < 1
\]
and
\begin{multline*}
  U^{\rm e}_{i+1}
  \left(
    x_1,v_1,\ldots,x_i,v_i,x_{i+1},V_{i+1}(x_1,v_1,\ldots,x_i,v_i,x_{i+1})
  \right)\\
  \ge
  U^{\rm m}_i(x_1,v_1,\ldots,x_i,v_i,x_{i+1})
  -
  \epsilon
\end{multline*}
(cf.\ \eqref{eq:m-U}),
where $\epsilon>0$ is a small constant (further details will be added later).
(Intuitively, $V_{i+1}$ outputs a $v>v_i$
at which the supremum of $U^{\rm e}_{i+1}(x_1,v_1,\ldots,x_{i+1},v)$ is almost attained.)
The existence of such $V_{i+1}$ follows from Lemma~\ref{lem:choice}:
indeed, for each rational $r\in(0,1)$ the set
\begin{multline*}
  A_r
  :=
  \bigl\{
    (x_1,v_1,\ldots,x_i,v_i,x_{i+1})\in D^{\rm m}_i
    \st
    r>v_i
    \text{ and }\\
    U^{\rm e}_{i+1}(x_1,v_1,\ldots,x_i,v_i,x_{i+1},r)
    \ge
    U^{\rm m}_{i}(x_1,v_1,\ldots,x_i,v_i,x_{i+1})
    -
    \epsilon
  \bigr\}
\end{multline*}
is Borel (namely, intersection of open and closed),
and the sets $A_r$ form a cover of $D^{\rm m}_i$.
By the uniform continuity of $U^{\rm e}_{i+1}$ and $U^{\rm m}_{i}$,
there is $\delta>0$ such that,
for all $i$ (remember that there are finitely many $i$)
and for all $x_1$, $v_1$,\ldots, $x_i$, $v_i$, $x_{i+1}$, and $x'_{i+1}$,
\begin{multline}\label{eq:stronger}
  \left|
    x'_{i+1} - x_{i+1}
  \right|
  <
  \delta\\
  \Longrightarrow
  U^{\rm e}_{i+1}
  \left(
    x_1,v_1,\ldots,x_i,v_i,x_{i+1},V_{i+1}(x_1,v_1,\ldots,x_i,v_i,x'_{i+1})
  \right)\\
  \ge
  U^{\rm m}_i(x_1,v_1,\ldots,x_i,v_i,x'_{i+1})
  -
  2\epsilon.
\end{multline}
Next choose Borel $V^*_i$ such that, for $v_i<1$,
\begin{equation}\label{eq:prob-epsilon-1}
  V_{i+1}(x_1,v_1,\ldots,x_i,v_i,\xi)
  >
  V^*_i(x_1,v_1,\ldots,x_i,v_i)
  >
  v_i
\end{equation}
with probability (over $\xi$ only) at least $1-\epsilon$
when $\xi$ is the value taken at time $S/N$
by a linear Brownian motion started from $x_i$ at time $0$ and stopped when it hits level $0$.
(The existence of $V^*_i$ also follows from Lemma~\ref{lem:choice}.)
Let $\Delta\in(0,S/N)$ be such that
\begin{equation}\label{eq:prob-epsilon-2}
  \sup_{t\in[0,\Delta]}
  \lvert W_t\rvert
  <
  \delta
\end{equation}
with a probability at least $1-\epsilon$,
where $W$ is a standard Brownian motion.

By a \emph{scaled Brownian motion} we will mean a process of the type $W_{ct}$
where $W$ is a Brownian motion and $c>0$
(equivalently, a process of the type $cW_t$ where $W$ is a Brownian motion and $c>0$).
Define a probability measure $P$ on $\Omega$ as the distribution of $\omega\in\Omega$ generated as follows.
For $i=0,1,\ldots,N-1$:
\begin{itemize}
\item
  Start a scaled Brownian motion $W^i$ (independent of what has happened before if $i>0$)
  from $x_i$ (with $x_0:=1$) at time $v_i$ (with $v_0:=0$)
  such that its quadratic variation over $[v_i,v^*_i]$ is $S/N-\Delta$, where
  \[
    v^*_i := V^*_i(x_1,v_1,\ldots,x_i,v_i) < 1.
  \]
  Define
  \begin{equation*}
    \omega|_{[v_i,v^*_i]}
    :=
    W^{\circ,i}|_{[v_i,v^*_i]}
  \end{equation*}
  where $W^{\circ,i}$ is $W^i$ stopped when it hits level 0.
  If $\omega(v^*_i)=0$, the random process of generating $\omega$ is complete;
  set $\omega|_{[v_i^*,1]}:=0$, $v_{i+1}^*=\cdots=v_{N-1}^*:=1$, and $v_{i+1}=\cdots=v_{N}:=1$,
  and then stop.
\item
  Set
  \begin{multline*}
    v_{i+1}
    :=\\
    \begin{cases}
      V_{i+1}(x_1,v_1,\ldots,x_i,v_i,\omega(v^*_i))
        & \text{if $V_{i+1}(x_1,v_1,\ldots,x_i,v_i,\omega(v^*_i))>v^*_i$}\\
      1 & \text{otherwise}.
    \end{cases}
  \end{multline*}
  Start another independent Brownian motion $\bar W^i$ from $\omega(v^*_i)$ at time $v^*_i$
  such that its quadratic variation over $[v^*_i,v_{i+1}]$ is $\Delta$.
  Define
  \[
    \omega|_{[v^*_i,v_{i+1}]}
    :=
    \bar W^{\circ,i}|_{[v^*_i,v_{i+1}]}
  \]
  where $\bar W^{\circ,i}$ is $\bar W^i$ stopped when it hits level 0.
  If $\omega(v_{i+1})=0$ or $v_{i+1}=1$ (or both),
  the random process of generating $\omega$ is complete;
  set $\omega|_{[v_{i+1},1]}:=0$ if $v_{i+1}<1$,
  set $v_{i+1}^*=\cdots=v_{N-1}^*:=1$ and $v_{i+2}=\cdots=v_{N}:=1$,
  and then stop.
\item
  Set $x_{i+1}:=\omega(v_{i+1})$; notice that $v_{i+1}<1$.
\end{itemize}
If the procedure was not stopped, and so $v_N<1$,
define $\omega|_{[v_N,1]}$ to be the constant $x_N=\omega(v_N)$.

Let us now check that $\Expect_P(F_N)\ge U^{\rm e}_0$.
More precisely, we will show by induction in $i$ that,
for $i=N,\ldots,0$,
\begin{equation}\label{eq:e-goal}
  \Expect_P(F_N\given\FFF_{\tilde v_i})
  \ge
  U^{\rm e}_i
  \left(
    \tilde x_1, \tilde v_1,
    \ldots,
    \tilde x_i, \tilde v_i
  \right)
  -
  (N-i)(3C+3)\epsilon
  \quad
  \text{a.s.},
\end{equation}
and that, for $i=N-1,\ldots,0$,
\begin{multline}\label{eq:m-goal}
  \Expect_P(F_N\given\FFF_{\tilde v^*_i})
  \ge
  U^{\rm m}_i
  \left(
    \tilde x_1, \tilde v_1,
    \ldots,
    \tilde x_i, \tilde v_i,
    \omega(\tilde v^*_i)
  \right)\\
  -
  (N-i)(3C+3)\epsilon
  +
  (C+1)\epsilon
  \quad
  \text{a.s.},
\end{multline}
where: $C:=\sup U$;
$\tilde x_j$ are $x_j$ (as defined in the definition of $P$)
considered as function of $\omega$
(it is clear that $x_j$ can be restored given $\omega$ $P$-almost surely);
similarly, $\tilde v_j$ and $\tilde v^*_j$ are $v_j$ and $v^*_j$
considered as functions of $\omega$;
$\FFF_{\tilde v_i}$ and $\FFF_{\tilde v^*_i}$ are the usual $\sigma$-algebras on $\Omega$
defined as in \eqref{eq:tau} for the stopping times $\tilde v_i$ and $\tilde v^*_i$.
Since, $\epsilon$ can be arbitrarily small,
\eqref{eq:e-goal} with $i=0$ will achieve our goal.

For $i=N$, \eqref{eq:e-goal} holds almost surely
as $U^{\rm e}_i:=U:=U_N$ and $F_N$ is defined by \eqref{eq:F_N}.

Assuming \eqref{eq:e-goal} with $i+1$ in place of $i$, $i<N$,
let us deduce \eqref{eq:m-goal}:
concentrating on the non-trivial case $\tilde v_i<1$,
\begin{align*}
  &\Expect_P(F_N\given\FFF_{\tilde v_{i}^*})
  =
  \Expect_P
  \Bigl(
    \Expect_P(F_N\given\FFF_{\tilde v_{i+1}})
    \given
    \FFF_{\tilde v^*_{i}}
  \Bigr)\\
  &\ge
  \Expect_P
  \Bigl(
    U^{\rm e}_{i+1}
    \left(
      \tilde x_1, \tilde v_1,
      \ldots,
      \tilde x_i, \tilde v_i,
      \tilde x_{i+1}, \tilde v_{i+1}
    \right)
    \given
    \FFF_{\tilde v^*_{i}}
  \Bigr)
  -
  (N-i-1)(3C+3)\epsilon\\
  &\ge
  U^{\rm m}_{i}
  \left(
    \tilde x_1, \tilde v_1,
    \ldots,
    \tilde x_i, \tilde v_i,
    \omega(\tilde v^*_i)
  \right)
  -
  (N-i-1)(3C+3)\epsilon
  -
  (2C+2)\epsilon\\
  &=
  U^{\rm m}_{i}
  \left(
    \tilde x_1, \tilde v_1,
    \ldots,
    \tilde x_i, \tilde v_i,
    \omega(\tilde v^*_i)
  \right)
  -
  (N-i)(3C+3)\epsilon
  +
  (C+1)\epsilon
  \quad
  \text{a.s.},
\end{align*}
where the second inequality
follows from the fact that
\[
  U^{\rm e}_{i+1}
  \left(
    \tilde x_1, \tilde v_1,
    \ldots,
    \tilde x_i, \tilde v_i,
    \tilde x_{i+1}, \tilde v_{i+1}
  \right)
  \ge
  U^{\rm m}_{i}
  \left(
    \tilde x_1, \tilde v_1,
    \ldots,
    \tilde x_i, \tilde v_i,
    \omega(\tilde v^*_i)
  \right)
  -
  2\epsilon
\]
with $\FFF_{\tilde v^*_{i}}$-conditional probability at least $1-2\epsilon$ a.s.
This fact in turn follows from \eqref{eq:prob-epsilon-1} and \eqref{eq:prob-epsilon-2}
each holding with probability at least $1-\epsilon$ 
(and so the conjunction of $\left|\omega(\tilde v_{i+1})-\omega(\tilde v^*_i)\right|<\delta$
and $\tilde v_{i+1}<1$
holding with $\FFF_{\tilde v^*_{i}}$-conditional probability at least $1-2\epsilon$ a.s.)\
combined with an application of \eqref{eq:stronger}.

Assuming \eqref{eq:m-goal} let us deduce \eqref{eq:e-goal}:
again concentrating on the case $\tilde v_i<1$,
\begin{align*}
  &\Expect_P(F_N\given\FFF_{\tilde v_{i}})
  =
  \Expect_P
  \Bigl(
    \Expect_P(F_N\given\FFF_{\tilde v^*_i})
    \given
    \FFF_{\tilde v_{i}}
  \Bigr)\\
  &\ge
  \Expect_P
  \Bigl(
    U^{\rm m}_i
    \left(
      \tilde x_1, \tilde v_1,
      \ldots,
      \tilde x_i, \tilde v_i,
      \omega(\tilde v^*_i)
    \right)
    \given
    \FFF_{\tilde v_{i}}
  \Bigr)
  -
  (N-i)(3C+3)\epsilon
  +
  (C+1)\epsilon\\
  &=
  \Expect
  U^{\rm m}_{i}
  \left(
    \tilde x_1, \tilde v_1,
    \ldots,
    \tilde x_i, \tilde v_i,
    \xi
  \right)
  -
  (N-i)(3C+3)\epsilon
  +
  (C+1)\epsilon\\
  &\ge
  U^{\rm e}_{i}
  \left(
    \tilde x_1, \tilde v_1,
    \ldots,
    \tilde x_i, \tilde v_i
  \right)
  -
  (N-i)(3C+3)\epsilon
  \quad
  \text{a.s.}
\end{align*}
where $\xi$ is the value at time $S/N-\Delta$
(rather than $S/N$ as in the definition of $U^{\rm e}_i$)
of a linear Brownian motion started at $\tilde x_i$
at time 0 and stopped when it hits level 0,
and $\Expect$ (without a subscript) refers to averaging over $\xi$ only.
The last inequality can be derived as follows:
\begin{itemize}
\item
  Using the time period $[0,S/N-\Delta]$ in place of $[0,S/N]$
  in the definition of $\xi$,
  we make an error (in the value of $\xi$) of at most $\delta$ with probability at least $1-\epsilon$:
  cf.\ \eqref{eq:prob-epsilon-2}.
\item
  This leads to an error of at most $f(\delta)$ with probability at least $1-\epsilon$
  in the expression
  $
    \Expect
    U^{\rm m}_{i}
    \left(
      \tilde x_1, \tilde v_1,
      \ldots,
      \tilde x_i, \tilde v_i,
      \xi
    \right)
  $,
  where $f$ is a modulus of continuity for all $U^{\rm m}_{i}$, $i=0,\ldots,N-1$.
\item
  Without loss of generality assume $f(\delta)\le\epsilon$.
\end{itemize}

\subsubsection{Tackling game-theoretic probability}

Now we show that $\UEG(F_N)\le U^{\rm e}_0$.

Let $\epsilon>0$ be a small positive number (see below for details of how small),
let $L$ be a large positive integer (see below for details of how large depending on $\epsilon$),
and for each $i=N,N-1,\ldots,0$,
define a function
\[
  \overline{U}_i:
  \mathbb{N}_0\times\{0,1,\ldots,L\}
  \times
  D^{\rm e}_i
  \to
  [0,\infty)
\]
by
\begin{equation}\label{eq:overline-U-basis}
  \overline{U}_i(X,L;x_1,v_1,\ldots,x_i,v_i)
  :=
  U^{\rm m}_i(x_1,v_1,\ldots,x_i,v_i,X\sqrt{S/NL})
\end{equation}
and, for $j=L-1,\ldots,1,0$,
\begin{multline}\label{eq:overline-U}
  \overline{U}_i(X,j;x_1,v_1,\ldots,x_i,v_i)
  :=\\
  \frac{
    \overline{U}_i(X-1,j+1;x_1,v_1,\ldots,x_i,v_i)
    +
    \overline{U}_i(X+1,j+1;x_1,v_1,\ldots,x_i,v_i)}
  {2},
\end{multline}
if $X>0$, and
\begin{equation}\label{eq:overline-U-degenerate}
  \overline{U}_i(0,j;x_1,v_1,\ldots,x_i,v_i)
  :=
  \overline{U}_i(0,j+1;x_1,v_1,\ldots,x_i,v_i).
\end{equation}
Equations~\eqref{eq:overline-U-basis}--\eqref{eq:overline-U-degenerate} assume $v_i<1$;
if $v_i=1$, set, e.g.,
\begin{equation*}
  \overline{U}_i(X,j;x_1,v_1,\ldots,x_i,v_i)
  :=
  U^{\rm m}_i(x_1,v_1,\ldots,x_i,v_i,X\sqrt{S/NL})
\end{equation*}
for all $j=0,\ldots,L$
(although the only interesting case for us is $v_i<1-\epsilon$).
We will fix $i\in\{0,1,\ldots,N\}$ for a while.

Let us check that
\begin{equation}\label{eq:always}
  U^{\rm e}_i
  \left(
    x_1,v_1,\ldots,x_i,v_i
  \right)
  \approx
  \overline{U}_i
  \left(
    \lfloor x_i/\sqrt{S/NL}\rfloor,0;x_1,v_1,\ldots,x_i,v_i
  \right),
\end{equation}
assuming $v_i<1$.
This follows from the KMT theorem
(Theorem~1 of Koml\'os, Major, and Tusn\'ady \cite{Komlos/etal:1976}; see also \cite{Komlos/etal:1975});
we will use its following special case (\cite{Chatterjee:2012}, Theorem~1.5).
\begin{KMTtheorem}
  Let $E_1,E_2,\ldots$ be i.i.d.\ symmetric $\pm1$-valued random variables.
  For each $k$, let $S_k := \sum_{i=1}^k E_i$.
  It is possible to construct a version of the sequence $(S_k)_{k\ge0}$
  and a standard Brownian motion $(B_t)_{t\ge0}$ on the same probability space
  such that, for all $n$ and all $x \ge 0$,
  \[
    \Prob
    \left(
      \max_{k\le n}
      \left|S_k - B_k\right|
      \ge
      C_1 \ln n + x
    \right)
    \le
    C_2 e^{-x},
  \]
  where $C_1$ and $C_2$ are absolute constants.
\end{KMTtheorem}
\noindent
(Although for our purpose much simpler results, such as those \cite{Strassen:1967}
based on Skorokhod's representation, would have been sufficient.)
On the left-hand side of~\eqref{eq:always} we have the average of
$\overline{U}^{\rm m}_i(x_1,v_1,\ldots,x_i,v_i,\cdot)$
w.r.\ to the value of a Brownian motion at time $S/N$ stopped when it hits level $0$
and on the right-hand side of~\eqref{eq:always} we have the average of the same function
w.r.\ to the value of a scaled simple random walk at the same time $S/N$
stopped when it hits level 0;
the scaled random walk makes steps of $S/NL$ in time and $\sqrt{S/NL}$ in space;
the Brownian motion and random walk are started from nearby points,
namely $x_i$ and $\lfloor x_i/\sqrt{S/NL}\rfloor\sqrt{S/NL}$.
By the KMT theorem there are coupled versions of the Brownian motion (not stopped)
and the scaled simple random walk (also not stopped)
that differ by at most $\epsilon$ over $[0,S]$
with probability at least $1-\epsilon$,
provided $L$ is large enough.
(For example, we can take $L$ large enough
for $x_i$ and $\lfloor x_i/\sqrt{S/NL}\rfloor\sqrt{S/NL}$
to be $\epsilon/2$-close
and for the precision of the KMT approximation over $[0,S]$
to be $\epsilon/2$ with probability at least $1-\epsilon$.)
The values at time $S/N$ of the stopped Brownian motion and stopped scaled random walk
can differ by more than $\epsilon$
even when their non-stopped counterparts differ by at most $\epsilon$ over $[0,S]$,
but as the argument in Lemma~\ref{lem:auxiliary} shows,
the probability of this is at most
$
  3\epsilon/\sqrt{S/N}
$
(we would have $2\epsilon/\sqrt{S/N}$ if both coupled processes were Brownian motions,
and replacing $2$ by $3$ adjusts for the discreteness of the random walk, for large $L$).
Therefore, the difference between the two sides of~\eqref{eq:always}
does not exceed
\begin{equation}\label{eq:g}
  g(\epsilon)
  :=
  f(\epsilon)
  +
  C 3\epsilon/\sqrt{S/N},
\end{equation}
where $f$ is a modulus of continuity of $\overline{U}^{\rm m}_i$
for all $i=0,\ldots,N-1$ and $C:=\sup U$.

For $i=1,\ldots,N$, set
\begin{align}
  v_{i}&=v_i(\omega):=\phi_{iS/N}(\omega)\wedge1,\label{eq:v}\\
  x_{i}&=x_i(\omega):=\omega(v_{i}).\label{eq:x}
\end{align}
During each non-empty time interval $[v_i(\omega),v_{i+1}(\omega))$
the trader will bet at the stopping times
\begin{align}
  T_{i,0}(\omega)
  &:=
  \inf
  \left\{
    t\ge v_{i}(\omega)
    \st
    \omega(t)/\sqrt{S/NL}\in\mathbb{N}_0
  \right\},
  \notag\\
  T_{i,j}(\omega)
  &:=
  \inf
  \left\{
    t\ge T_{i,j-1}(\omega)
    \st
    \omega(t)/\sqrt{S/NL}\in\mathbb{N}_0,
    \;
    \omega(t)\ne\omega(T_{i,j-1}(\omega))
  \right\},
  \notag\\
  &\qquad\qquad
  j\in\{1,\ldots,L\},
  \notag
\end{align}
such that $T_{i,j}(\omega) < v_{i+1}(\omega)\wedge(1-\epsilon)$;
therefore, we are only interested in the case $j\in\{1,\ldots,J_i\}$ where
\begin{equation*}
  J_i=J_i(\omega)
  :=
  \max
  \bigl\{
    j\in\{0,\ldots,L\}
    \st
    T_{i,j}(\omega) < v_{i+1}(\omega)
  \bigr\}
\end{equation*}
($J_i=L$ being a common case).
Besides, the bet at the times $v_i(\omega)$ will be set to zero
unless $v_i(\omega)=T_{i,0}(\omega)$.
The bets at the times $T_{i,L}(\omega)$ will also be set to zero
unless $T_{i,L}(\omega)=T_{i+1,0}(\omega)$.

For $j=0,\ldots,L$, set
\begin{equation*}
  X_{i,j}
  :=
  \omega(T_{i,j})/\sqrt{S/NL}
  \in
  \mathbb{N}_0.
\end{equation*}
The bet at time $T_{i,j}(\omega)<1-\epsilon$ is $0$ if $X_{i,j}=0$ or $j=L$;
otherwise, it is defined in such a way that the increase of the capital over $[T_{i,j},T_{i,j+1}]$
is typically
\begin{equation}\label{eq:increase}
  \overline{U}_i(X_{i,j+1},j+1;x_1,v_1,\ldots,x_i,v_i)
  -
  \overline{U}_i(X_{i,j},j;x_1,v_1,\ldots,x_i,v_i)
\end{equation}
(this assumes, e.g., $T_{i,j+1}\le v_{i+1}$);
namely, the bet at time $T_{i,j}$ is formally defined as
\begin{equation}\label{eq:bet}
  \frac
  {
    \overline{U}_i(X_{i,j}+1,j+1;x_1,v_1,\ldots,x_i,v_i)
    -
    \overline{U}_i(X_{i,j},j;x_1,v_1,\ldots,x_i,v_i)
  }
  {\sqrt{S/NL}}.
\end{equation}
(When $X_{i,j+1}>X_{i,j}$, the increase is \eqref{eq:increase} by the definition of the bet,
and when $X_{i,j+1}<X_{i,j}$, the increase is \eqref{eq:increase} by the definition of the bet
and the definition \eqref{eq:overline-U}.)

Let us check that this strategy achieves the final value
greater than or close to $F_N(\omega)$
(with high lower game-theoretic probability) starting from $U^{\rm e}_0$.
More generally, we will check that the capital $\K$ of this strategy
(started with $U^{\rm e}_0$)
at time $v_{i}(\omega)$, $i=0,1,\ldots,N$, satisfies
\begin{equation*}
  \K_{v_{i}(\omega)}
  \gtrsim
  U^{\rm e}_i
  \left(
    x_1(\omega),v_1(\omega),\ldots,x_i(\omega),v_i(\omega)
  \right)
\end{equation*}
with lower game-theoretic probability close to~1,
in the notation of~\eqref{eq:v}--\eqref{eq:x}.
More precisely, we will check that, for $i=0,1,\ldots,N$ such that $v_i(\omega)<1-\epsilon$,
\begin{equation}\label{eq:by-induction}
  \K_{v_{i}(\omega)}
  \ge
  U^{\rm e}_i
  \left(
    x_1(\omega),v_1(\omega),\ldots,x_i(\omega),v_i(\omega)
  \right)
  -
  iA
\end{equation}
with lower game-theoretic probability at least $1-2i\epsilon$,
where
\[
  A
  :=
  3f(\epsilon)
  +
  g(\epsilon)
\]
and $g(\epsilon)$ is defined by \eqref{eq:g}.

We use induction in $i$.
Suppose \eqref{eq:by-induction} holds;
our goal is to prove \eqref{eq:by-induction} with $i+1$ in place of $i$.
We have, for $v_{i+1}<1-\epsilon$:
\begin{align}
  &\K_{v_{i+1}}
  \ge
  \K_{T_{i,J_i}}
  -
  f(\epsilon)
  \label{eq:chain-start}\\
  &=
  \K_{T_{i,0}}
  +
  \overline{U}_i(X_{i,J_i},J_i;x_1,v_1,\ldots,x_i,v_i)
  -
  \overline{U}_i(X_{i,0},0;x_1,v_1,\ldots,x_i,v_i)
  \notag\\
  &\qquad-
  f(\epsilon)
  \notag\\
  &\ge
  \K_{v_{i}}
  +
  \overline{U}_i(X_{i,J_i},J_i;x_1,v_1,\ldots,x_i,v_i)
  -
  U^{\rm e}_i(x_1,v_1,\ldots,x_i,v_i)
  \label{eq:-1}\\
  &\qquad-
  f(\epsilon)
  -
  g(\epsilon)
  \notag\\
  &\ge
  \overline{U}_i(X_{i,J_i},J_i;x_1,v_1,\ldots,x_i,v_i)
  -
  iA
  -
  f(\epsilon)
  -
  g(\epsilon)
  \label{eq:0}\\
  &\ge
  \overline{U}_i(X_{i,J_i},L;x_1,v_1,\ldots,x_i,v_i)
  -
  iA
  -
  2f(\epsilon)
  -
  g(\epsilon)
  \label{eq:1}\\
  &=
  U^{\rm m}_i(x_1,v_1,\ldots,x_i,v_i,X_{i,J_i}\sqrt{S/NL})
  -
  iA
  -
  2f(\epsilon)
  -
  g(\epsilon)
  \label{eq:1.5}\\
  &\ge
  U^{\rm m}_i(x_1,v_1,\ldots,x_i,v_i,x_{i+1})
  -
  iA
  -
  3f(\epsilon)
  -
  g(\epsilon)
  \label{eq:2}\\
  &\ge
  U^{\rm e}_i(x_1,v_1,\ldots,x_i,v_i,x_{i+1},v_{i+1})
  -
  iA
  -
  3f(\epsilon)
  -
  g(\epsilon)
  \label{eq:chain-end}
\end{align}
where:
\begin{itemize}
\item
  the inequality~\eqref{eq:chain-start} holds
  for a large enough $L$
  and follows from the form \eqref{eq:bet} of the bets
  (called off at time $T_{i,L}$)
  and the uniform continuity of $U^{\rm m}_i$
  (which propagates to $\overline{U}_i$)
  with $f$ as modulus of continuity (for all $i$);
  the error term $f(\sqrt{S/NL})$ is replaced by the cruder $f(\epsilon)$;
\item
  the inequality~\eqref{eq:-1} follows from the approximate equality \eqref{eq:always},
  whose accuracy is given by \eqref{eq:g}
  (notice that the accuracy \eqref{eq:g} is also applicable to \eqref{eq:always}
  with $\lceil\cdots\rceil$ in place of $\lfloor\cdots\rfloor$);
  this inequality also relies on the equality $\K_{T_{i,0}}=\K_{v_{i}}$,
  which follows from our definition of the bets;
\item
  the inequality~\eqref{eq:0} holds with lower game-theoretic probability at least $1-2i\epsilon$
  by the inductive assumption;
\item
  the inequality~\eqref{eq:1} holds with lower game-theoretic probability at least $1-\epsilon$ for a large enough $L$,
  and follows from Theorem~3.1 of \cite{\CTIV}
  and the uniform continuity of $U^{\rm m}_i$ with $f$ as modulus of continuity;
\item
  the equality~\eqref{eq:1.5} holds by the definition \eqref{eq:overline-U-basis};
\item
  the inequality~\eqref{eq:2} also holds with lower game-theoretic probability at least $1-\epsilon$
  for a large enough $L$
  and follows from Theorem~3.1 of \cite{\CTIV}
  and the uniform continuity of $U^{\rm m}_i$
  with $f$ as modulus of continuity.
\end{itemize}
We can see that the overall chain \eqref{eq:chain-start}--\eqref{eq:chain-end}
holds with lower probability at least $1-2(i+1)\epsilon$.

So far we have considered the case $v_{i+1}<1-\epsilon$.
Now suppose
\begin{equation*}
  1-\epsilon\in(v_{i}(\omega),v_{i+1}(\omega)].
\end{equation*}
As soon as time $1-\epsilon$ is reached, the strategy stops playing:
we will show that with a lower game-theoretic probability arbitrarily close to $1$
the goal has been achieved.
Indeed, as we saw above,
\begin{equation*}
  \K_{v_{i}(\omega)}
  \gtrsim
  \overline{U}^{\rm e}_{i}(x_1,v_1,\ldots,x_i,v_i)
\end{equation*}
with high lower game-theoretic probability.
Let us check that
\begin{equation*}
  \K_{1-\epsilon}
  \gtrsim
  F_N(\omega)
\end{equation*}
with high lower game-theoretic probability.
This is true since $\K_{1-\epsilon}$ is, with high lower probability,
greater than or close to the average of
\begin{align*}
  \overline{U}^{\rm m}_{i}(x_1,v_1,\ldots,x_i,v_i,\xi)
  &\ge
  \overline{U}^{\rm e}_{i+1}(x_1,v_1,\ldots,x_i,v_i,\xi,1)\\
  &=
  \overline{U}^{\rm e}_{i+1}(x_1,v_1,\ldots,x_i,v_i,\omega(1),1)\\
  &=
  \overline{U}^{\rm e}_{N}(x_1,v_1,\ldots,x_i,v_i,\omega(1),1,\ldots,\omega(1),1)\\
  &\ge
  F_N(\omega) - f(\epsilon)
\end{align*}
(cf.\ \eqref{eq:F_N} and \eqref{eq:basis-U})
over the value $\xi$ at time $(i+1)S/N-\langle\omega\rangle_{1-\epsilon}$
of a Brownian motion started at $\omega(1-\epsilon)$ at time 0
and stopped when it hits level 0,
where $\langle\omega\rangle$ is the quadratic variation of $\omega$
as defined in \cite{\CTIV}, Section~8.

To ensure that his capital is always positive,
the trader stops playing as soon as his capital hits $0$.
Increasing his initial capital by a small amount
we can make sure that this will never happen
(for $L$ sufficiently large).
Increasing his initial capital by another small amount
we can make sure that he always superhedges $F_N$
and not just with high lower game-theoretic probability.
Letting $L\to\infty$, we obtain $\UEG(F_N)\le U^{\rm e}_0$.

\section{Conclusion}

There is no doubt that this version of the paper
makes various unnecessary assumptions.
To relax or eliminate those assumptions
is a natural direction of further research.

\subsection*{Acknowledgements}

The impetus for writing this paper was a series of discussions in April 2015
with Nicolas Perkowski, David Pr\"omel, Martin Huesmann,
Alexander M. G. Cox, Pietro Siorpaes, and Beatrice Acciaio
during the junior trimester ``Optimal transport''
held at the Hausdorff Mathematical Centre (Bonn, Germany).
They posed the problem of proving or disproving the coincidence
of game-theoretic and measure-theoretic probability
in the case of the full Wiener space $\Omega$,
and this paper gives a positive answer to a radically simplified version of that problem.
I am grateful to the organizers of the trimester for inviting me
to give a mini-course on game-theoretic probability.
Thanks to Gert de Cooman and Jasper de Bock for numerous discussions
and their critique of my ``narrow'' definition of game-theoretic probability
(the one given in \cite{\CTIV}) as too broad;
with apologies to them, this paper experiments with an even broader definition.

\end{document}